\definecolor{redcolor}{rgb}{0.8,0,0}
\definecolor{bluecolor}{rgb}{0,0,0.8}
\definecolor{greencolor}{rgb}{0.0,0.5,0.0}
\DeclareMathOperator*{\argmin}{argmin}
\begin{document}
\title{Fast Evaluation of Smooth Distance Constraints on Co-Dimensional Geometry}

\begin{teaserfigure}
  \includegraphics[width=\textwidth]{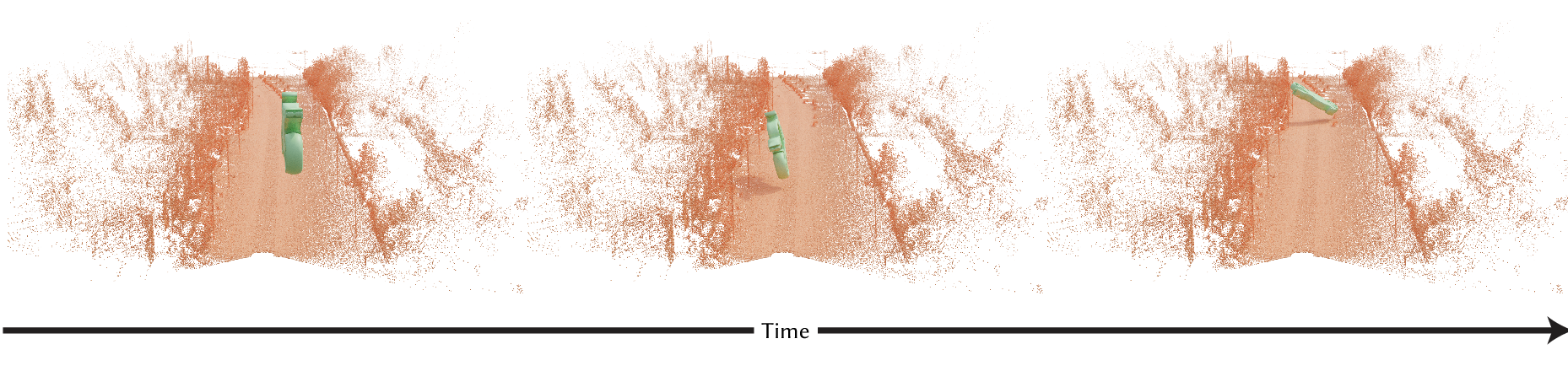}
  \caption{Using a smooth distance function as an intersection-free constraint, we can simulate rigid body contact between a variety of meshes. Here, a motorcycle jumps onto a point cloud street, where it slides and crashes through the street, hitting electrical poles along the way. Collisions are resolved using a single inequality constraint in a primal-dual interior-point solver.}
  \label{fig:bike_city}
\end{teaserfigure}

\author{Abhishek Madan}
\affiliation{%
  \institution{University of Toronto}
  \city{Toronto}
  \country{Canada}
}
\email{amadan@cs.toronto.edu}
\author{David I.W. Levin}
\affiliation{%
  \institution{University of Toronto}
  \city{Toronto}
  \country{Canada}
}
\email{diwlevin@cs.toronto.edu}

\newcommand{\bx}{\mathbf{x}}
\newcommand{\by}{\mathbf{y}}
\newcommand{\bp}{\mathbf{p}}
\newcommand{\bv}{\mathbf{v}}
\newcommand{\bq}{\mathbf{q}}
\newcommand{\Md}{\mathcal{M}}
\newcommand{\Mq}{\bar{\mathcal{M}}}
\newcommand{\bV}{\bar{V}}
\newcommand{\bF}{\bar{F}}
\newcommand{\bphi}{\bm{\phi}}
\newcommand{\blambda}{\bm{\lambda}}
\newcommand{\bpi}{\bm{\pi}}
\newcommand{\btheta}{\bm{\theta}}

\begin{abstract}
We present a new method for computing a smooth minimum distance function based on the LogSumExp function for point clouds, edge meshes, triangle meshes, and combinations of all three.
We derive blending weights and a modified Barnes-Hut acceleration approach that ensure our method approximates the true distance, and is conservative (points outside the zero isosurface 
are guaranteed to be outside the surface) and efficient to evaluate for all the above data types.
This, in combination with its ability to smooth sparsely sampled and noisy data, like point clouds, shortens the gap between data acquisition and simulation,
and thereby enables new applications such as direct, co-dimensional rigid body simulation using unprocessed lidar data.
\end{abstract}

%

\begin{CCSXML}
<ccs2012>
   <concept>
       <concept_id>10010147.10010371.10010352.10010381</concept_id>
       <concept_desc>Computing methodologies~Collision detection</concept_desc>
       <concept_significance>500</concept_significance>
       </concept>
 </ccs2012>
\end{CCSXML}

\ccsdesc[500]{Computing methodologies~Collision detection}

%
%

\keywords{smooth distances, co-dimensional geometry}

\maketitle

\section{Introduction}

Distance fields are integral to many applications in computer graphics and scientific computing. 
In rendering, distance fields provide an implicit shape representation that enables both flexible editing and fast display, while in physics simulation they provide a convenient way to represent distance-mediated interactions between simulated objects — such as collisions. 
Any geometric representation can be converted into a distance field, whether it be a point cloud, edge mesh, or triangle mesh. 
Thus, algorithms that rely on distance field representations are theoretically invariant to input geometry type. 
This is important because many applications of geometry processing and physics simulation act on mixed geometric input (e.g., self-driving car simulations represent the cars as polygonal models but the environment is acquired as a point cloud via lidar scan). 

Unfortunately, current distance field representations fall short of living up to these theoretical advantages. 
Storing distance fields on grids is memory intensive and can require costly preprocessing, while fitting neural networks alleviates the memory pressure but requires a much higher upfront cost in training time and data consumption, while also being difficult to generalize.  
Complicating proceedings is the fact that, often, representing the exact distance field is not ideal for practical applications since input geometric models are usually an approximation of the underlying true object. 
Representing any curved, smooth surface using piecewise linear triangles is an obvious example, but noisy or incomplete data, like lidar point clouds, is another. 
While methods exist for accurately simulating the latter (Fig.~\ref{fig:ipc-points} reproduced from \citet{Ferguson:2021:RigidIPC}), the result is not generally applicable  to cases where discrete samples are meant to coalesce into a smooth surface.

\begin{figure}
  \includegraphics[width=\columnwidth]{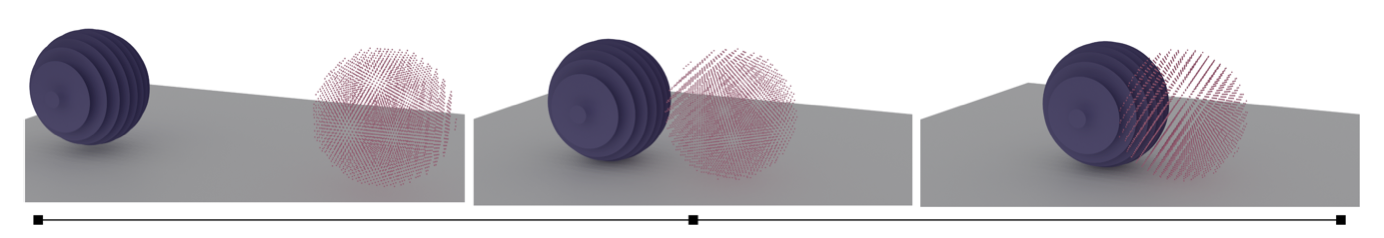}
  \caption{Using Rigid IPC~\cite{Ferguson:2021:RigidIPC}, a collision between a sphere of points and a sphere of disjoint planes causes them to lock together on impact. While an impressive demonstration of robustness, this result is counter intuitive if the point cloud were meant to represent a solid sphere.}
  \label{fig:ipc-points}
\end{figure}

Further complicating matters is the fact that exact distance fields are typically not smooth, which limits the choice of algorithms that can be applied. 
Finally, existing geometry processing pipelines are often set up to receive closed triangle mesh input only~\citep{Li2020IPC}, with co-dimensional inputs (inputs that feature a mixture of triangles, edges, and points) considered special cases~\citep{Li2021CIPC}.

In this paper we present a smooth distance formulation that addresses these issues while simultaneously guaranteeing theoretical properties that are crucial for distance fields to function robustly in rendering and simulation applications. 
Specifically, our method  

\begin{itemize}
  \item is an implicit function, which naturally fits in an optimization context;
  \item is an approximation of the exact unsigned minimum distance between two geometric quantities;
  \item can represent different types of geometry (in this paper we focus on points, edges, and triangles), and can represent both closed and open or co-dimensional geometry;
  \item is smooth (i.e., able to take derivatives), which is beneficial for optimization;
  \item is efficient to evaluate;
  \item conservatively estimates (i.e., underestimates) exact distance.
\end{itemize}

We achieve these goals by using a smooth minimum distance function based on LogSumExp (sometimes called Kreisselmeier-Steinhauser distance),
augmented by weight functions to remove bulge artifacts from edge and triangle mesh isosurfaces, and a conservative Barnes-Hut approximation to speed up function evaluations while inducing slight discontinuities at the near field-far field boundary (see Section~\ref{sec:params}).
We demonstrate the efficacy and ease-of-use of our smooth distance field representation on a number of colliding rigid body simulations which directly act on co-dimensional geometry, including difficult cases such as collision-mediated interaction with lidar data featuring millions of points.
Our approach could be a drop in improvement to many existing computer graphics applications as well as a major step forward for cutting-edge pursuits such as the direct simulation of self-driving cars in lidar environments. 

\section{Related Work}

The LogSumExp function is commonly used in deep learning (see, e.g.,~\cite{Zhang2020}) as a smooth estimate of the maximum of a set of data.
Its gradient is the softmax function (which is not a maximum as the name implies, but a smooth estimate of the one-hot argmax of a set of data).
LogSumExps can be easily modified to return a smooth minimum distance rather than a maximum (and its gradient is the softmin).
Aside from deep learning, LogSumExps also appear in other contexts where smooth approximations to min/max functions are needed: for example, they are known in the engineering literature as the Kreisselmeier-Steinhauser distance~\cite{Kreisselmeier1979}.
LogSumExps have also been used recently in computer graphics by \citet{Panetta2017} to smoothly blend between microstructure joints.
LogSumExp is just one of a number of smooth distance functions.
For instance, the $L_p$ norm function has been used as a smooth distance as well, for smooth blending between implicit surfaces~\cite{Wyvill1999} and computing smooth shells around surfaces~\cite{Peng2004}.
A similar function can be computed directly from boundary integrals~\cite{Belyaev2013}.
These functions could act as a drop in replacement for much of our proposed algorithm, but the former function tends to return numerically 0 results for far-away points which makes the zero isosurface ambiguous, and it is unclear if the latter even exhibits the important underestimate property.
Not only do LogSumExps satisfy the desired property, but they numerically return \texttt{inf} for far-away points which leaves the zero isosurface unambiguous, and so we choose to construct our method around the LogSumExp function.

\citet{Gurumoorthy2009} demonstrated a relationship between the Eikonal equation and the time-independent Schr\"odinger equation (which is a screened Poisson equation), and used this to derive the LogSumExp function as an approximate solution to the Eikonal equation. 
They evaluate the LogSumExp using a convolution in the frequency domain, which requires a background grid to compute the FFT and its inverse. Further, their method requires all data points to be snapped to grid vertices.
While more efficient than a full evaluation, our method achieves comparable asymptotic performance without a background grid and therefore respects the input geometry.
\citet{Sethi2012} extended this line of work by adding support for edges, but they integrate the exponentiated distance over each edge, which, as we show in Section~\ref{sec:dist}, can lead to overestimated distances.
Computing a distance approximation by taking a logarithm is also conceptually similar to Varadhan's formula geodesic distance, which was the inspiration for the geodesics in heat method~\cite{Crane:2017:HMD}.

\begin{figure*}
  \includegraphics[width=\textwidth]{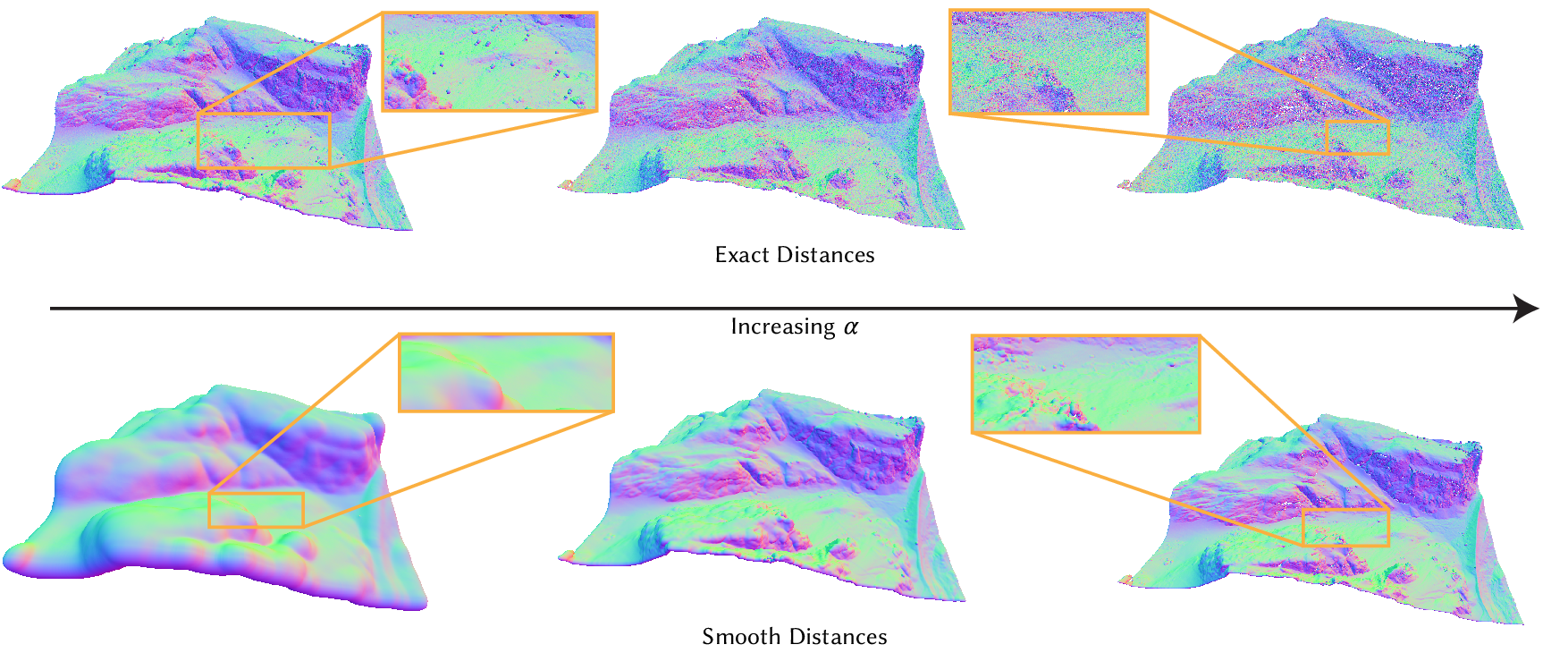}
  \caption{A comparison between smooth and exact (offset) distance isosurfaces, with gradients represented by colour, at varying values of $\alpha$, using $1/\alpha$ for exact distance offets. Exact distances trade off between inflating noisy data in the center and far side of the point cloud at low $\alpha$, and noisy gradients at high $\alpha$. Meanwhile, smooth distances (bottom) transition from smooth and inflated to more exact as $\alpha$ increases, without exacerbating noisy data or yielding poor gradients.}
  \label{fig:exact_smooth}
\end{figure*}

Smooth signed distance functions (SDFs) have recently become popular in machine learning as well. A full accounting is beyond the scope of this paper but 
see for instance \citet{Chen2019,Park2019,Mescheder2019}.
These approaches encode geometric information in latent vectors to be used at evaluation time, along with an input position in space to evaluate a learned signed distance function.
Aside from being unsigned rather than signed, our distance approximation diverges in two important ways from work on Neural SDFs.
First, our representation is an augmentation to the exact geometry as a smooth approximation rather than an outright replacement. 
This means that algorithms that still require the original discrete geometry~\cite{Li2020} for operations such as continuous collision detection can make use of our method.
Second, the only preprocessing in our method is building a BVH over the data (which typically takes less than a second) rather than training a neural network, which is significantly more expensive.

One particular feature of smooth distance functions is that they can use point data to construct implicit functions whose zero isosurface represents the surface.
There are many other methods that accomplish this, such as radial basis functions~\cite{Carr2001} and Poisson surface reconstruction~\cite{Kazhdan2006,Kazhdan2013}.
Although these methods are capable of producing very accurate surface reconstructions, they require solving large linear systems, while in our approach the implicit function is readily available. 
Another surface reconstruction method is the point set surface~\cite{Alexa2003}, though instead of obtaining an implicit function, this method finds points on the described surface.
Level set methods~\cite{Zhao2001} have also been used to reconstruct surfaces, though this requires a grid and may require prohibitively dense voxels to capture fine detail in the underlying surface.

Collision resolution has long been a difficult problem in physics-based animation and engineering.
While no efficient method for deformable SDFs exists, a useful approximation is to use a signed distance function of the undeformed space~\cite{McAdams2011}.
\citet{Mitchell2015} extended this work by using a non-manifold grid to accurately represent high-frequency and even zero-width features.
Recently, \citet{Macklin2020} have used SDFs to represent rigid objects that robustly collide with deformable objects in an extended position-based dynamics framework~\cite{Macklin2016}.
Further, barrier energies for constrained optimization can be seen as a smooth analog for distance-based constraints.
These have seen much success in geometry processing~\cite{10.1145/2766947} and physics simulation~\cite{Li2020,Ferguson:2021:RigidIPC},
and follow-up work has proposed separate extensions to co-dimensional~\cite{Li2021CIPC} and medial geometry~\cite{MedialICP}. 
While effective for this particular application, the methods of \citet{Macklin2020} and \citet{Li2020} lack the ability to smooth input data~\cite{Ferguson:2021:RigidIPC} and so cannot, for instance, approximate point clouds as closed surfaces~(Fig.~\ref{fig:bike_city}) for smooth collision resolution. 
Another shared technical limitation of these methods is the large number of constraints they generate (one per primitive pair), which must be mitigated through techniques like spatial hashing.
For these reasons, we view our method as complementary to the aforementioned approaches: our method smooths the input data while also combining every pairwise primitive constraint into a single constraint.

Simulation frameworks such as Bullet~\cite{Coumans2021} and PhysX~\cite{nvidia2021} often accelerate collisions through bounding proxies that cover sections of geometry such as convex hulls, spheres, and cylinders.
These approaches are effective for real-time simulation where speed is preferred over accuracy, but not comparable with our method since we aim for accurate off-line simulation.

The key to our method is a carefully designed weighted smooth distance function, combined with a specialized Barnes-Hut approximation~\cite{Barnes1986}.
The Barnes-Hut algorithm was first developed for N-body simulations to reduce computational effort on far-away bodies with negligible contributions to force.
Barnes-Hut approximations have seen use in many graphics applications which use rapidly decaying kernels (e.g.,~\cite{Barill2018,Alexa2003,Yu:2021:RC}),
but as we will show, careful modification is needed to ensure that fast evaluation does not break the conservative bounds of the LogSumExp function.

\paragraph{Contributions}
In this paper we present a new method for computing smooth distance fields on co-dimensional input geometry.
We derive blending weights and a modified Barnes-Hut acceleration approach that ensures our method is conservative (points outside the zero isosurface 
are guaranteed to be outside the surface), accurate, and efficient to evaluate for all the above data types.
This, in combination with its ability to smooth sparsely sampled data like point clouds, enables new applications such as direct, co-dimensional rigid body simulation using unprocessed lidar data.

\section{Method}\label{sec:method}

Given an input geometry $\Omega$ embedded in $\mathbb{R}^3$, the unsigned distance field is defined as 

\begin{equation}\label{eq:dist}
d\left(\bq \right) = \min_{\Omega} d\left(\mathbf{x}, \bq \right),
\end{equation} where $\mathbf{x}\in\Omega$ is a point on the input geometry and $\bq\in\mathbb{R}^3$ is an arbitrary query point.

In our case, the input geometry is represented as a co-dimensional data mesh, $\Md = (V,F)$.
Here $V$ is a set of vertices $\bv_i$ in $\mathbb{R}^3$, and $F$ is a set of primitives.
In this paper, we deal exclusively with points, edges, and triangles, so we use the terms ``simplex'' and ``primitive'' interchangeably.
A simplex $f_i$ consists of a tuple of indices into $V$.
For example, a triangle would be represented as $f_i = (i_0, i_1, i_2)$ where each $i_k$ indexes $V$.
We also need to reference the \textit{faces} of a simplex, which we define as any non-empty subset of a simplex; for example, edges and points are faces of triangles.
Based on this definition, edges that are faces of triangles can be constructed from triangle indices using, e.g., $i_{01} = (i_0, i_1)$.
Lastly, we denote the dimension of a simplex as $n(f_i)$ and the volume of a simplex as $|f_i|$ (which is 1 for points).
$F$ can be omitted for point clouds, but we will use it throughout to keep the notation consistent.

With respect to this discretized input, the distance field computation can be reframed as finding the minimum distance between the query point and
all constituent simplicies of $\Md$:

\begin{equation}\label{eq:mindist}
    d(\Md,\bq) = \min_{i} d(f_i, \bq).
\end{equation}

\setlength{\columnsep}{0.7em}
\setlength{\intextsep}{0em}
\begin{wrapfigure}{r}{0.6\columnwidth}
  \centering
  \includegraphics[width=0.6\columnwidth]{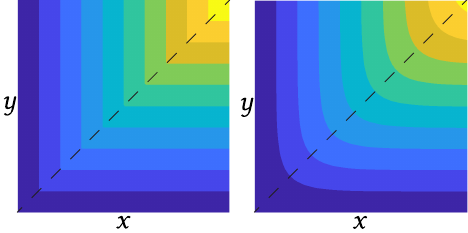}
\end{wrapfigure}

However, $\min$ is only $C^0$, and has discontinuities along the medial axes of the data mesh (see inset, left, for a plot of $\min(x,y)$, with the gradient discontinuity $x=y$ shown as a dashed line).
Furthermore, exact distances are a poor representation of co-dimensional geometric representations like point clouds that only loosely approximate the true underlying (volumetric) geometry.
The distance isosurfaces to a point cloud have poor gradients at small offsets and amplify sample noise at large offsets (Fig.~\ref{fig:exact_smooth}).
Our goal is to tackle both problems by designing a distance function that is at least $C^1$ differentiable and produces a smooth isosurface approximating the underlying geometry.

We begin by converting the true distance to a smooth distance via an application of the LogSumExp smooth minimum function which yields
\begin{equation}\label{eq:lse}
    \hat{d}(\Md, \bq) = -\frac{1}{\alpha} \log \left( \sum_{f_i \in F} w_i(\bq) \exp (-\alpha d_i) \right),
  \end{equation} where $d_i = d(f_i, \bq)$, $\alpha$ controls the accuracy and smoothness of the approximation and $w_i$ are influence weights for each simplex, which will be discussed in more detail in Section~\ref{sec:weights}.
(See inset, right, for a plot of $-0.1\log(\exp(-10x)+\exp(-10y))$.)

Importantly, this function is differentiable  for all finite values of $\alpha$ with gradient (with respect to $\bq$)
\begin{equation}\label{eq:dlse}
  \nabla \hat{d}(\Md, \bq) = \frac{\sum_{f_i \in F} w_i(\bq) \exp(-\alpha d_i)\nabla d_i - \frac{1}{\alpha}\exp(-\alpha d_i) \nabla w_i(\bq)}{\sum_{f_i \in F} w_i(\bq) \exp (-\alpha d_i)},
\end{equation}

and is guaranteed to underestimate the true distance to the input mesh when every $w_i \ge 1$, with error bounded by $\frac{\log(A|F|)}{\alpha}$ where $A$ is the maximum value of all $w_i$ (Appendix~\ref{sec:underestimate_proof}).
The differentiability of LogSumExp allows us to preserve the underlying smoothness of the distance functions, which are at least $C^1$ almost everywhere (Appendix~\ref{sec:exact_dist}),
and the underestimate (or conservative) property means that our smooth distance will alert us to $\bq$ crossing the input surface before it happens.
This conservative property is crucial for applications such as collision detection since it ensures that
maintaining separation with regards to the smooth distance is sufficient to maintain separation between input shapes~(Fig.~\ref{fig:noncon}).
Ergo, the output of our method will remain usable if downstream applications require the underlying geometric representations to be separated.

An additional nice property of LogSumExp is its accuracy: not only is its error merely logarithmic in mesh size, but as $\alpha$ increases, distances become more accurate.

\begin{figure}
  \includegraphics[width=\columnwidth]{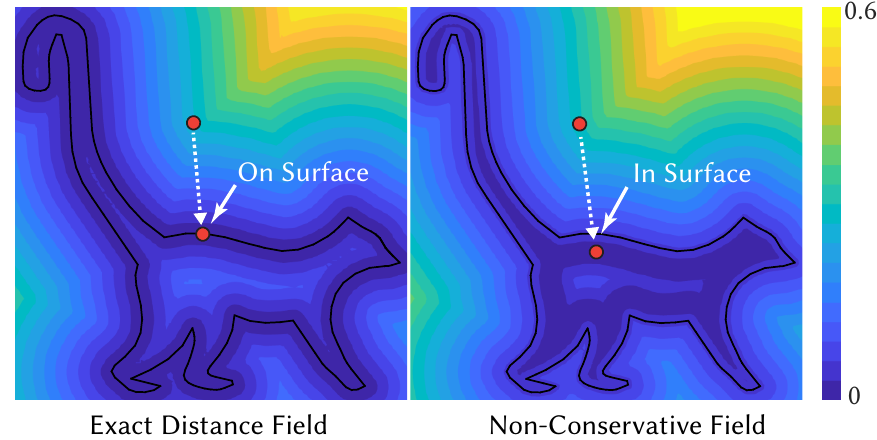}
  \caption{Points (red) starting outside a shape can be prevented from crossing into it by ensuring their trajectories (dashed line) never cross the $0$ isosurface of an unsigned distance field. 
  Non-conservative estimates break this property, allowing interpenetration of the underlying  geometry.  Such estimates are unusable if downstream algorithmic stages require the geometries to be intersection-free.}
  \label{fig:noncon}
\end{figure}

\subsection{Smooth Distance to a Single Query Point}\label{sec:dist}
As a didactic example let us apply Eq.~\ref{eq:lse} to a point cloud which we do by setting $d_i = \lVert \bq - \bv_i \rVert$ and $w_i = 1$. 
We can directly observe the smoothing effect of $\alpha$ (Fig.~\ref{fig:underestimate}), which can be used to close point clouds.
Decreasing $\alpha$ produces progressively smoother surface approximations, and surfaces produced with smaller $\alpha$ values nest those produced with higher $\alpha$'s.
This nesting is a consequence of the conservative behaviour of the LogSumExp formula.

\begin{figure}
  \includegraphics[width=\columnwidth]{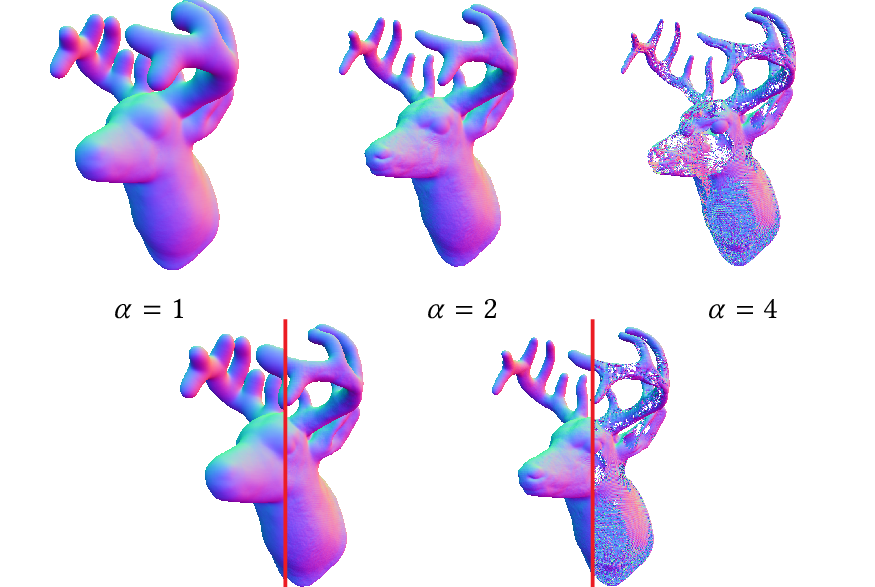}
  \caption{At low values of $\alpha$, the sharp antlers and face of this deer point cloud become puffy and smooth (top left). Increasing $\alpha$ resolves features in the face more clearly (top middle), and the individual points become visible at high $\alpha$ (top right). The bottom row shows the transition between different $\alpha$ values --- higher-$\alpha$ surfaces are contained in lower-$\alpha$ surfaces.}
  \label{fig:underestimate}
\end{figure}

All of this taken together means that the naive LogSumExp works well for point cloud geometry. 

\subsection{Smooth Distances to Co-Dimensional Geometry}
A natural extension of Eq.~\ref{eq:lse} to edge and triangle meshes is to replace the discrete sum over points with a continuous integral (see, e.g.,~\cite{Sethi2012}) over the surface:
\begin{equation*}
    \hat{d}(f_i, \bq) = -\frac{1}{\alpha} \log \left( \int_{f_i} \exp(-\alpha \lVert \bx - \bq \rVert) d\bx \right).
\end{equation*}
When discretized, this becomes equivalent to applying Eq.~\ref{eq:lse} to quadrature points on the mesh, while the $w_i$ become the quadrature weights.

While simple, this formulation will unfortunately break the important conservative property of the LogSumExp function because the quadrature weights will, in general,
not be greater than or equal to 1.
Fig.~\ref{fig:integration} shows examples of the overestimation errors introduced by this approach. 

\begin{figure}
    \includegraphics[width=\columnwidth]{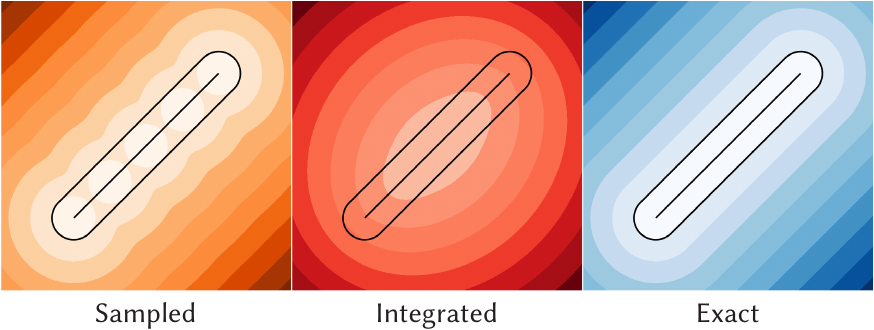}
    \caption{Computing edge distances with LogSumExp and sampled point quadrature (left) creates holes in the isosurface at high $\alpha$. Integrating over the edge (middle) using 5th order Gaussian quadrature can remove the holes but will overestimate distance at low $\alpha$. Computing exact distances (right) produces the correct isosurface. The edge and a small offset surface are shown in each, where the underestimate property requires that only the first color interval should be contained in the offset region.}
    \label{fig:integration}
\end{figure}

Rather we must return to Eq.~\ref{eq:lse} and compute the respective $d_i$'s to the constituent mesh triangles and edges exactly. 
This can be accomplished via efficient quadratic programming which we detail in Appendix~\ref{sec:exact_dist}.
However, using unit valued weights, as we did for points, produces bulging artifacts where primitives connect (Fig.~\ref{fig:weights}).
What remains is to compute weight functions that mitigate these effects while simultaneously satisfying our $\ge 1$ constraint.

\begin{figure}
  \includegraphics[width=\columnwidth]{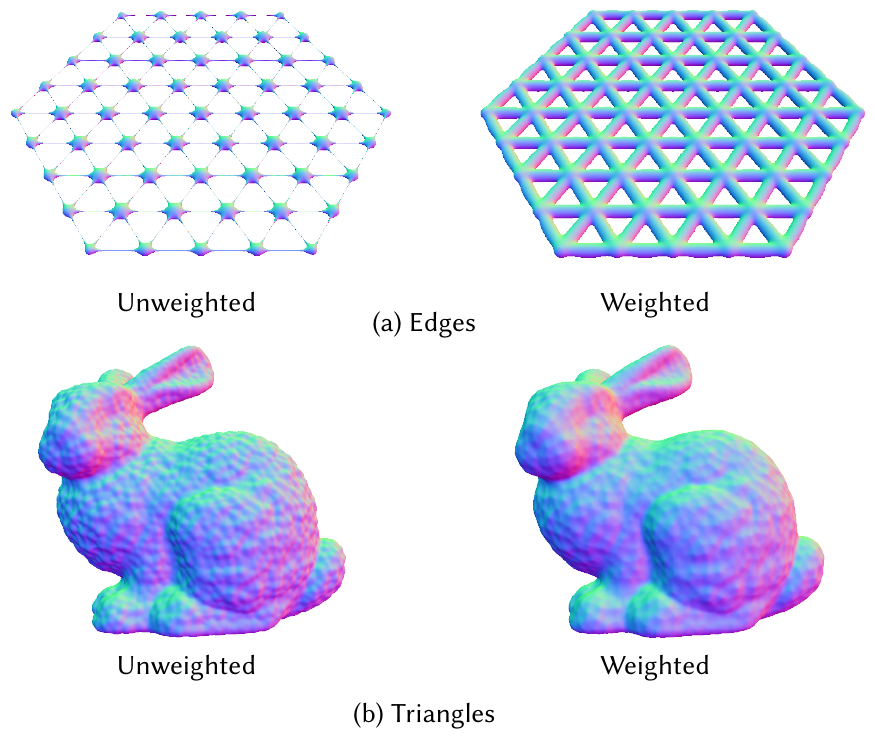}
  \caption{Distances tend to concentrate where primitives overlap, creating thin edges (top left) and bumps in the surface (bottom left). Weight functions help mitigate these effects (right), for both edge meshes (a) and triangle meshes (b).}
  \label{fig:weights}
\end{figure}

\subsection{Weight Functions}\label{sec:weights}
First, we must understand why these bulging artifacts occur.
If the closest point on $\Md$ to $\bq$ is on a simplex face, then every simplex containing that face will have the same closest point, resulting in a more severe underestimate of the true distance than usual.
While this phenomenon occurs any time $\bq$ is on the medial axis of $\Md$ (the cause of the logarithmic error term), separately computing the distance to each primitive effectively creates an artifical medial axis where there is only one true closest point that is contained in multiple primitives.
\citet{Panetta2017} observe the same concentration artifacts when using LogSumExp on edge meshes, but they propose fixing it using weighted blending, where the weight calculation relies on knowing the convex hull of the edge mesh neighborhood and blending between smooth and exact distance fields.
Unfortunately, it is unclear how to extend this to triangles or apply acceleration schemes like Barnes-Hut.
Instead, we propose a different weighting scheme which is fast, local and can be easily adapted to triangles.

We center the design for our per-primitive weight functions $w_i$ around high-accuracy use cases, which correspond to high values of $\alpha$.
The weights must be spatially varying to counteract the local concentration artifacts, and for simplicity, the function will be defined in terms of the barycentric coordinates of the closest point projection of $\bq$ onto $f_i$.
The projection function is denoted by $\bpi_i(\bq)$ and the barycentric coordinates of a point $\bp$ within $f_i$ are denoted $\bphi_i(\bp)$; using these definitions, our weight function is $w_i(\bq) = w_i(\bphi_i(\bpi_i(\bq)))$.
We will first design a weight function that mitigates the bulge artifacts without regard for maintaining the conservative property, which we will denote as $\tilde{w_i}$, and then derive an appropriate global scale factor for every $\tilde{w}_i$ in $\Md$ to achieve the conservative property.

The weight functions will be defined as polynomials in terms of $\bphi_i$, so constructing the weight functions is a matter of determining the polynomial coefficients by solving a system of linear equations, determined by both point constraints and derivative constraints.
Qualitatively, the point constraints aim to both assign a low weight to the simplex boundary and assign a high weight to the simplex interior (see Appendix~\ref{sec:weight_app} for more details).
The derivative constraints constrain the normal derivative to be 0 along the boundary, which is necessary in order to ensure $w_i$ is smooth everywhere.
To see this, we note that the gradient of $\tilde{w_i}$ with respect to $\bq$ is
\begin{equation}\label{eq:dw_new}
  \nabla \tilde{w}_i = \frac{\partial \bpi_i}{\partial \bq} \frac{\partial \bphi_i}{\partial \bpi_i} \frac{\partial \tilde{w}_i}{\partial \bphi_i},
\end{equation}
where we use the indexing convention $\left[ \frac{\partial \by}{\partial \bx} \right]_{pq} = \frac{\partial \by_q}{\partial \bx_p}$ for a generic vector function $\by(\bx)$ (i.e., gradients with respect to a scalar function are column vectors).
$\bpi_i$ is in fact a $C^0$ function, with a derivative discontinuity on the boundary.
When $\bpi_i(\bq)$ is in the interior of $f_i$, the gradient $\frac{\partial \bpi_i}{\partial \bq} \in \mathbb{R}^{3 \times 3}$ is the identity matrix, but when it is on the boundary, the gradient has a null space in the direction perpendicular to the boundary, making the normal derivative the zero vector (Fig.~\ref{fig:perp}) and creating a derivative discontinuity at the simplex boundary.
Left unchecked, this discontinuity will propagate to $w_i$ as well.
We opt to hide this discontinuity by coercing $\frac{\partial w_i}{\partial \bphi_i}$ to have zero normal derivative along the boundary while also being smooth.
See Appendix~\ref{sec:weight_app} for details on how the derivative constraints are enforced, and Fig.~\ref{fig:tri_weight} for an example weight function.

\begin{figure}
  \includegraphics[width=\columnwidth]{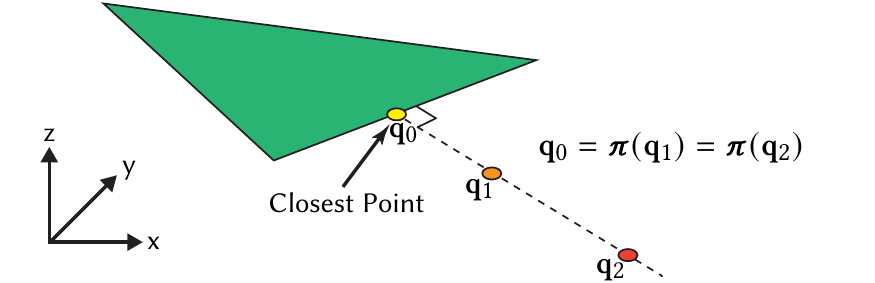}
  \caption{All points (such as $\bq_1$ and $\bq_2$) along a line extending perpendicular to a simplex boundary share a closest point $\bq_0$, and thus the normal derivative of the closest point projection $\bpi$ is the zero vector.}
  \label{fig:perp}
\end{figure}

\begin{figure}
  \includegraphics[width=\columnwidth]{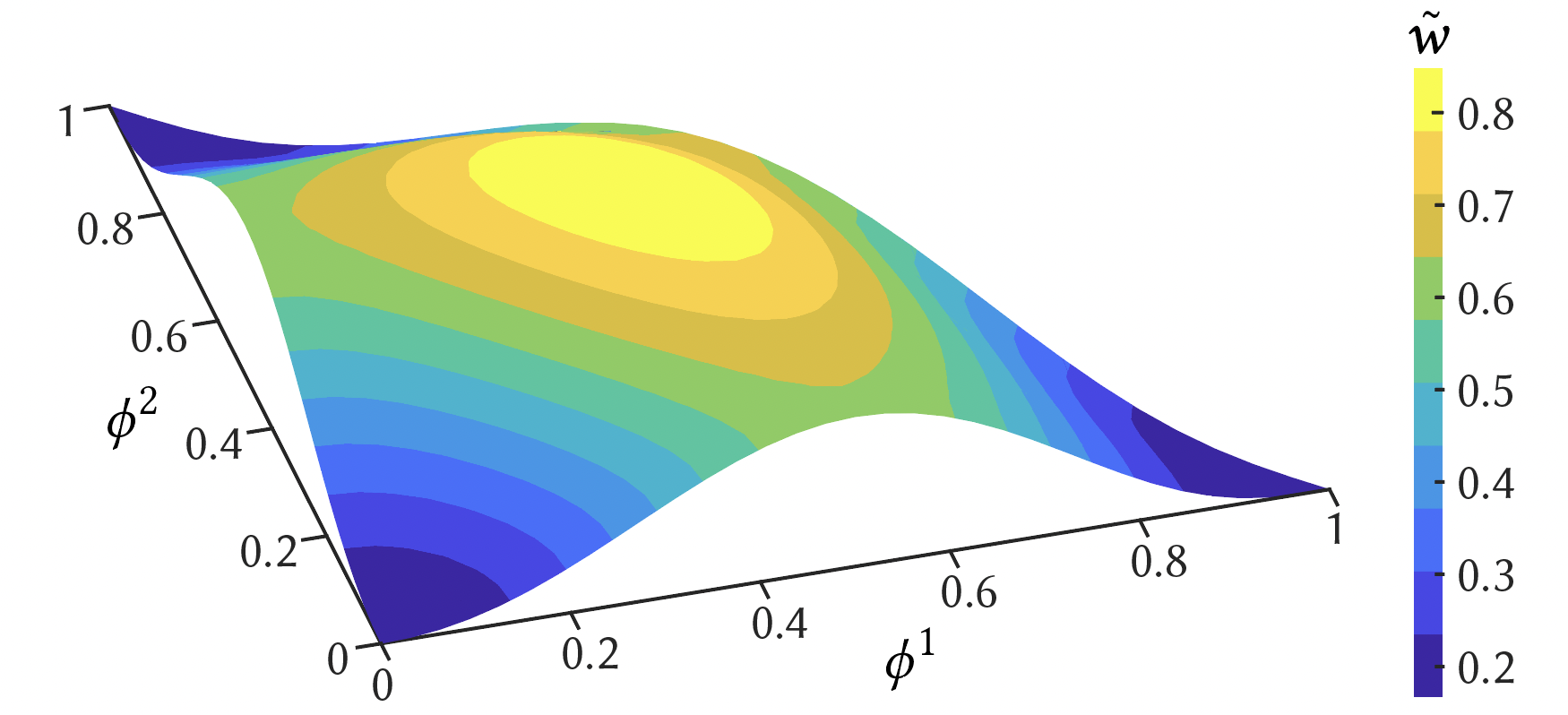}
  \caption{An example of the unscaled triangle weights, $\tilde{w}$, computed by our method for a single triangle.}
  \label{fig:tri_weight}
\end{figure}

Since the point constraints require $\tilde{w}_i$ to be less than 1 along the simplex boundary, they break the conservative property.
We rectify this by scaling $\tilde{w}_i$ by a factor $A$ associated with $\Md$ and not individual simplices, so that $A\tilde{w}_i \ge 1$ for all $f_i \in \Md$.
Although this inflates the zero isosurface, it does so uniformly, so we do not reintroduce the artifacts we wanted to remove.
This error is logarithmic in $A$ and becomes negligible at higher values of $\alpha$.

We can further refine our weight functions to improve their behaviour.
One such improvement targets the $\frac{\partial \bphi_i}{\partial \bpi_i}$ term in the gradient.
Since $\bpi_i$ is restricted to producing points within $f_i$, this term is in fact equivalent to linear shape function gradients from finite element analysis.
For triangles, our barycentric coordinate vector is $\bphi_i = [ \phi_i^1, \phi_i^2 ]^\top$, and our gradients (with respect to points in $f_i$) are $\nabla \phi_i^1 = \frac{(\bv_{i_0} - \bv_{i_2})^\bot}{2|f_i|}$ and $\nabla \phi_i^2 = \frac{(\bv_{i_1} - \bv_{i_0})^\bot}{2|f_i|}$ where $\bx^\bot$ represents a 90-degree counterclockwise rotation of $\bx$.
We see that gradients are inversely proportional to triangle area, and thus create gradient artifacts in smaller triangles (Fig.~\ref{fig:metric}).
Our goal is to control the magnitude of these gradients.

Since larger triangles have smaller gradients, a simple way to fix the gradients is to isotropically scale the space by a factor $\rho$, compute $\hat{d}$, and then scale back to the original space at the end.
This way, weight gradients are computed in the scaled space and we can directly control the magnitude of the barycentric gradients.
However, if we expand Eq.~\ref{eq:lse} using distances scaled by $\rho$, we notice that $\rho$ behaves the exact same way as $\alpha$, and so we do not actually need a new parameter.
Instead, $\alpha$ itself can be interpreted as a uniform scale factor in space, and controlling accuracy with $\alpha$ is equivalent to measuring lengths with the metric $\alpha^2I$ where $I$ is the identity matrix.
Going back to weight gradients, we now measure lengths in $\alpha$-scaled space, and we get barycentric gradients such as $\nabla \phi_i^1 = \frac{(\bv_{i_0} - \bv_{i_2})^\bot}{2\alpha|f_i|}$, which allows us to reduce the norm of the problematic gradient term (Fig.~\ref{fig:metric}).
Other quantities stay the same since they are either independent of the metric (e.g., $\nabla \bpi_i$), or their dependence on $\alpha$ and thus the metric space is explicit and already accounted for.

\begin{figure}
  \includegraphics[width=\columnwidth]{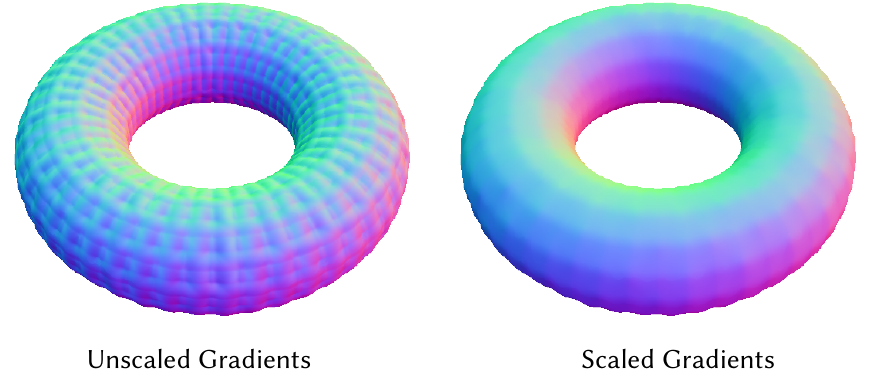}
  \caption{Small meshes produce large weight gradients which produce noticeable ridges on this torus (left); uniformly scaling the ambient space allows us to control the length of these gradients and smooth out the ridges (right).}
  \label{fig:metric}
\end{figure}

Another improvement is the behaviour of the $w_i$'s at low $\alpha$.
A key assumption in our design of $w_i$ is that $\alpha$ is sufficiently large, but since $\alpha$ is a controllable user parameter, this assumption can sometimes be violated, and the weights can overcompensate for concentration.
In these cases $w_i$ needs to be further modified to avoid artifacts (Fig.~\ref{fig:attenuate}).
If $\alpha$ surpasses some threshold $\alpha_U$ (which at a high level represents a upper bound on $\alpha$ --- see Section~\ref{sec:params} for more information on how it is selected), then we can use $w_i$ as-is; if $\alpha < \alpha_U$, then we must flatten or attenuate $w_i$.
Experimentally we observe that using a scale factor $S = \frac{\alpha}{\max(\alpha, \alpha_U)}$ to define an \textit{attenuated} weight function $w_i(\bq) = (A \tilde{w}_i(\bq))^S$, with gradient $\nabla w_i(\bq) = S(A\tilde{w}_i(\bq))^{S-1} \nabla \tilde{w}_i(\bq)$, helps minimize these issues.
Note that this heuristic does not eliminate the issue for all $\alpha$ but does mitigate it sufficiently for all ranges of $\alpha$ used in this paper.

\begin{figure}
  \includegraphics[width=\columnwidth]{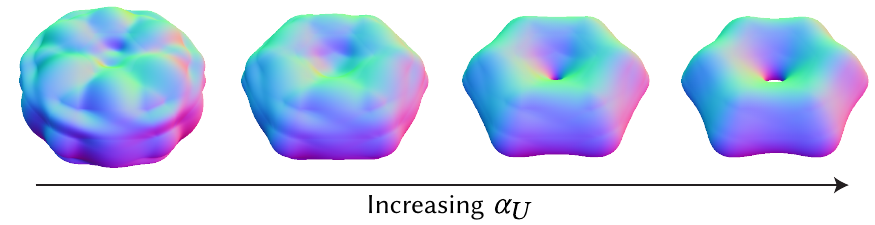}
  \caption{At a low value of $\alpha$, the weight function overcompensates and produces creases and other spurious artifacts in this very blobby hexagonal torus; adjusting $\alpha_U$ makes it possible to blend between weighted and unweighted distances to mitigate these artifacts.}
  \label{fig:attenuate}
\end{figure}

\subsection{Generalized Query Primitives}\label{sec:general}
We will briefly discuss how to generalize our query point into a general query primitive $g$, which can also be an edge or a triangle.
Just like with points, we can compute the distance to another simplex $f_i$ as $d(f_i, g)$ using a positive semidefinite quadratic program with linear constraints (see Appendix~\ref{sec:exact_dist}).
We now also obtain barycentric coordinates for $g$ through the argmin, which we will denote as $\blambda$.
All formulas in the preceding subsections can be adapted for this generalization by simply replacing $\bq$ with $g$, and using the closest point to $\Md$ on $g$, $g(\blambda)$, when a single point is needed (e.g., in weight functions).
These changes imply that $\nabla \hat{d}$ is now with respect to an entire query primitive, but this simply describes a rigid translation of $g$.
Taking gradients with respect to entire primitives also ensures that, even in configurations with multiple closest point pairs such as parallel edges, the gradient is the same for each closest point pair and thus well-defined.

\subsection{Barnes-Hut Approximation}\label{sec:bh}
Although our distance function is smooth and easy to differentiate, it requires evaluation of a distance between every pair of primitives in the most general case.
However, its use of exponentially decaying functions enables the application of the well-known Barnes-Hut approximation~\cite{Barnes1986} to accelerate evaluation.
Barnes-Hut uses a \textit{far field approximation} to cluster groups of far-away primitives together (typically using a spatial subdivision data structure like an octree or bounding volume hierarchy) and treat them as a single point.
The approximation is characterized by the centers of each bounding region, and a user parameter $\beta$ which controls where the far field approximation is employed --- see Appendix~\ref{sec:farfield} for details.
At a high level, lower $\beta$ is more accurate, and $\beta = 0$ results in an exact evaluation.

Placing the far-field expansion at the center of mass reduces overall error~\cite{Barnes1986} but it can possibly overestimate the true distance, breaking the conservative nature of the LogSumExp smooth distance.
A minor modification can reinstate this property -- placing the expansion center on the closest point of the cluster region to the query primitive $g$, rather than at the center of mass.
When using an octree or BVH, we pick the closest point on the bounding box to the query point.
This may increase the error relative to using the center of mass, but it guarantees that the Barnes-Hut estimate only underestimates the exact smooth distance (Fig.~\ref{fig:approx}).
Note that this choice does not affect the gradient if the bounding region is convex (which is the case for bounding box hierarchies and octrees), since rigidly translating $g$ farther away along the gradient direction does not change the closest point.

\begin{figure}
  \includegraphics[width=\columnwidth]{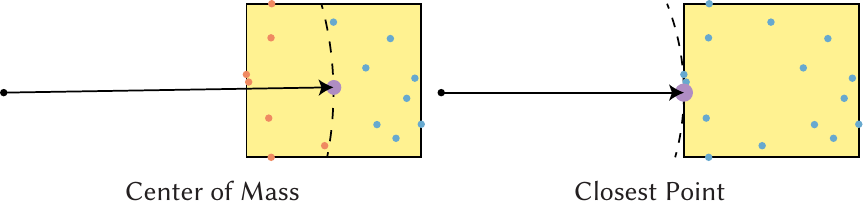}
  \caption{The center of the Barnes-Hut approximation (purple) has a significant effect on the sign of the error. The center of mass can potentially be farther away than some data points (orange), which can lead to an overestimate of the true distance. On the other hand, the closest point on the bounding box is guaranteed to be closer than every data point (blue), so the final result will always be an underestimate.}
  \label{fig:approx}
\end{figure}

Finding the closest point on a box to a query point is simple, but finding the closest point on a box to a query edge or triangle is more complex, and requires solving a quadratic program.
However, this is relatively expensive for what is meant to be a fast check to help reduce computation time, so we instead use an approximation of the problem.
Viewing a box as the intersection of 6 halfspaces, we identify which halfspaces the primitive lies completely outside of.
If there are 3 such halfspaces, they identify a corner of the box that is closest to the primitive; if there are 2 halfspaces, they identify an edge of the box; and if there is 1 halfspace, it identifies a face of the box.
If there are no halfspaces that satisfy this criteria, then we simply return the distance to the box as 0 and force the traversal to visit its children.
Although this test somewhat inhibits the Barnes-Hut approximation's ability to group primitives together, the computational savings per visited node more than make up for it --- we noticed over a $5\times$ improvement in performance in our experiments.

Psueodocode summarizing our method so far is given in Algorithm~\ref{alg:contrib} and Algorithm~\ref{alg:mindist}.

\begin{algorithm}
  \caption{Collecting contributions for $\hat{d}(\Md, g)$ \hspace{4em} \textsc{collectContributions}}\label{alg:contrib}
  \SetAlgoLined
  \SetKwInOut{Input}{Inputs}
  \SetKwInOut{Output}{Outputs}
  \Input{Data mesh $\Md$, BVH node $B$, query primitive $g$, parameters $\alpha$, $\alpha_U$, $\beta$}
  \Output{Sum of exponentials $c$ and sum of weighted distance gradients $\nabla c$}
  \tcp{$B$ has children $B.l$ and $B.r$}
  \BlankLine
  \uIf{$\textsc{BarnesHutCondition}(\Md, B, g, \beta)$ \tcp*{App.~\ref{sec:farfield}, Sec.~\ref{sec:bh}}}{
    Return far field expansion \tcp*{App.~\ref{sec:farfield}}
  }
  \uElseIf{$B$ is a leaf containing $f_i$}{
    $(d_i, \nabla d_i, \bphi_i, \blambda) \leftarrow d(f_i, g)$ \tcp*{Sec.~\ref{sec:general}, App.~\ref{sec:exact_dist}}
    $(w_i, \nabla w_i) \leftarrow \textsc{weightFn}(f_i, \bphi_i, g, \blambda, \alpha, \alpha_U)$ \tcp*{Sec.~\ref{sec:weights}}
    $c \leftarrow w_i \exp(-\alpha d_i)$ \tcp*{Eq.~\ref{eq:lse}}
    $\nabla c \leftarrow c \nabla d_i - \frac{1}{\alpha} \exp(-\alpha d_i) \nabla w_i$ \tcp*{Eq.~\ref{eq:dlse}}
    Return $(c, \nabla c)$\;
  }
  \Else{
    $(c_l, \nabla c_l) \leftarrow \textsc{collectContributions}(\Md, B.l, g, \alpha, \alpha_U, \beta)$\;
    $(c_r, \nabla c_r) \leftarrow \textsc{collectContributions}(\Md, B.r, g, \alpha, \alpha_U, \beta)$\;
    Return $(c_l + c_r, \nabla c_l + \nabla c_r)$\;
  }
\end{algorithm}

\begin{algorithm}
  \caption{Computing $\hat{d}(\Md, g)$ \textsc{smoothMinDist}}\label{alg:mindist}
  \SetAlgoLined
  \SetKwInOut{Input}{Inputs}
  \SetKwInOut{Output}{Outputs}
  \Input{Data mesh $\Md$, BVH node $B$, query primitive $g$, parameters $\alpha$, $\alpha_U$, $\beta$}
  \Output{Smooth min distance $\hat{d}$ and gradient $\nabla \hat{d}$}
  \BlankLine
  $(c, \nabla c) \leftarrow \textsc{collectContributions}(\Md, B, g, \alpha, \alpha_U, \beta)$\;
  $\hat{d} \leftarrow -\frac{1}{\alpha} \log c$\;
  $\nabla \hat{d} \leftarrow \frac{\nabla c}{c + \epsilon}$ \tcp*{Avoid divide-by-zero}
\end{algorithm}

\subsection{Smooth Distance to a Query Mesh}

Now that we are equipped with an efficient method for computing $\hat{d}(\Md, g)$, we can combine these distances using LogSumExp to obtain a smooth distance between $\Md$ and a \textit{query mesh} $\Mq = (\bV, \bF)$:
\begin{equation}\label{eq:lse_full}
  \hat{d}(\Md, \Mq) = -\frac{1}{\alpha_q} \log \left( \sum_{g_j \in \bF} \exp \left( -\alpha_q \hat{d}(\Md, g_j) \right) \right),
\end{equation}
where we have introduced a new accuracy-controlling parameter $\alpha_q$ that is independent of the inner $\alpha$.
In an exact evaluation ($\beta = 0$), this strongly resembles the LogSumExp of all the pairwise distances between each $f_i$ and $g_j$.

A subtle issue with the current formulation is that the distance gradients $\nabla \hat{d}(\Md, g_j)$ are with respect to $g_j$ as a whole, but we need per-vertex gradients as those are the true degrees of freedom.
One simple way to do this is to split $\exp(-\alpha_q \hat{d}(\Md, g_j))$ equally between the vertices of $g_j$, and rewriting the summation over vertices and one-ring neighbourhoods gives us:
\begin{equation}\label{eq:lse_full2}
  \hat{d}(\Md, \Mq) = -\frac{1}{\alpha_q} \log \left( \sum_{\bq_k \in \bV} \sum_{g_j \in \mathcal{N}_k} \frac{1}{n(g_j)+1} \exp \left( -\alpha_q \hat{d}(\Md, g_j) \right) \right),
\end{equation}
where $\mathcal{N}_k$ is the set of one-ring neighbours of $\bq_k$.
Then, the gradient with respect to query vertex $\bq_k$ is
\begin{equation}\label{eq:dlse_full}
  \nabla_k \hat{d}(\Md, \Mq) = \frac{\sum_{g_j \in \mathcal{N}_k} \frac{1}{n(g_j)+1} \exp \left( -\alpha_q \hat{d}(\Md, g_j) \right) \nabla \hat{d}(\Md, g_j)}{\sum_{g_j \in \bF} \exp \left( -\alpha_q \hat{d}(\Md, g_j) \right)}.
\end{equation}
Once again, the convexity of the query and data primitives means that we can interpret the $\nabla \hat{d}(\Md, g_j)$ as a rigid translation of $g_j$ that affects all its vertices equally.

The summation in Eq.~\ref{eq:lse_full} can be easily parallelized, and the gradient computation requires only a small amount of serialization at the end to redistribute gradients onto vertices.

\section{Results}

\subsection{Implementation}
We implemented our method using C++ with Eigen~\cite{eigen} and \textsc{libigl}~\cite{libigl}.
We designed the implementation so that it could be ported onto the GPU, and to this end, we implemented the BVH traversal algorithm outlined in Algorithm~\ref{alg:contrib} using the stackless method of \citet{Hapala2011}.
Since GPUs exhibit poor performance for double-precision floating point, most of our computations (particularly vector arithmetic) are conducted in single-precision, while exponential sums are tracked using double-precision to increase the range of usable $\alpha$ values.
Primitive distances were hand coded if feasible, and were otherwise implemented using a null-space quadratic program solver written using Eigen (edge-triangle and triangle-triangle distances).
An important aspect of these distance computations is robustness, which becomes particularly important because distances are computed using single-precision floats, and errors in the distance can result in breaking the conservative property.
The hand-coded distance functions made extensive use of an algorithm by Kahan which employs the fused-multiply-add instruction to reduce cancellation error~\cite{Kahan2004}, while the quadratic program solver leveraged Eigen's numerically stable algorithms.

\subsection{Sphere Tracing}\label{sec:spheretrace}
Sphere tracing is a method to render signed (and unsigned) distance functions~\cite{Hart1996}.
We can use sphere tracing to visualize the zero isosurface of $\hat{d}$ with $\Mq$ as a single point, which we have done throughout the paper for demonstrative purposes.

\subsection{Parameter Analysis}\label{sec:params}
To demonstrate the effectiveness of Barnes-Hut, we conduct an ablation study on $\beta$.
In order for Barnes-Hut to be useful in approximating a constraint function, it primarily needs to be accurate near the zero isosurface, as that is where it is evaluated in constrained optimization problems (e.g., rigid body contact constraints are only evaluated when there is a potential collision).
The approximation becomes increasingly inaccurate farther away from the surface, but since we are only concerned with the zero isosurface, we use sphere tracing as a sampling technique.
Using the Stanford bunny mesh's vertices, edges, and triangles, with $\alpha=200$, $\alpha_U=1200$, and $\beta$ between 0 and 1, we measure the amount of time taken to render a $512 \times 512$ image, as well as the distance along each ray between the approximation's estimate of the isosurface and the actual isosurface ($\beta=0$).
The results are shown in Fig.~\ref{fig:betatest}, using 4 threads on a 2015 MacBook Pro.
We see that running time decreases by an order of magnitude even for $\beta=0.2$ in all three cases, and all renders take less than 10s at $\beta=0.5$.
The error relative to the bounding box diagonal also remains below 4\% for these values of $\beta$.
As a result, we use $\beta=0.5$ in all of our examples in the paper unless otherwise stated.
We find that $\beta$ can be increased with low error for higher values of $\alpha$ (or equivalently, meshes with lower sampling density), but this is not necessary to obtain significant speedups.

\begin{figure}
  \begin{subfigure}{\columnwidth}
    \includegraphics[width=\columnwidth]{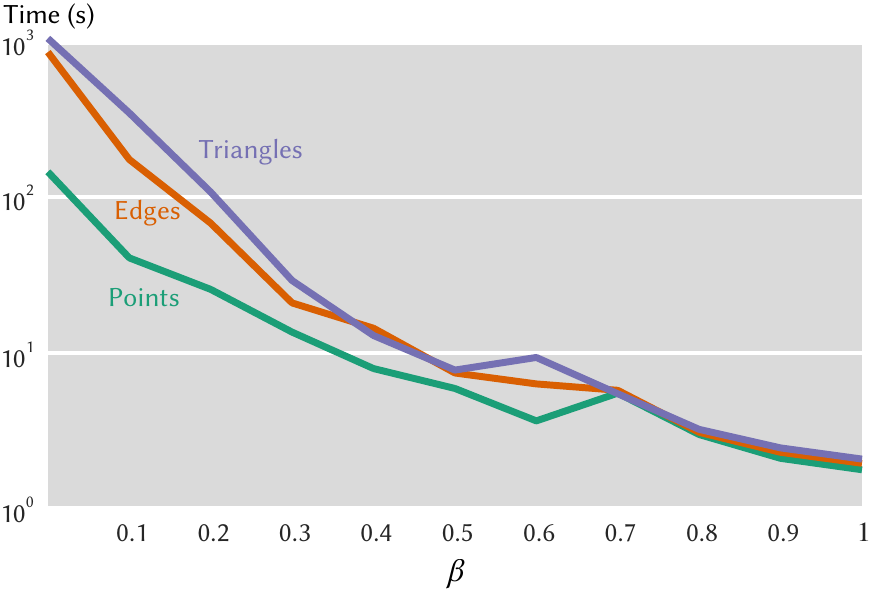}
    \label{fig:betatime}
  \end{subfigure}
  \begin{subfigure}{\columnwidth}
    \includegraphics[width=\columnwidth]{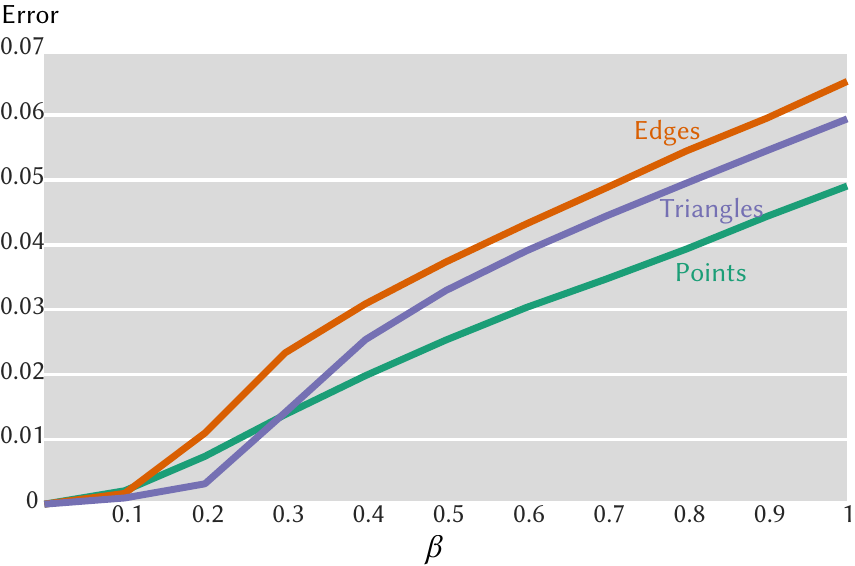}
    \label{fig:betaerr}
  \end{subfigure}
  \caption{Sphere tracing the Stanford bunny's smooth distance function over its vertices (green), edges (orange), and triangles (purple) obtains order-of-magnitude speedups (top) even for small $\beta$, while achieving low isosurface approximation error relative to the bounding box diagonal (bottom).}
  \label{fig:betatest}
\end{figure}

Due to Barnes-Hut relying on a switch between a near field and far field approximation, the smooth distance field can exhibit discontinuities.
In order to quantify the effect of these discontinuities, we perform a sensitivity analysis on $\beta$ at points where the evaluation switches between a near field and far field expansion, since we expect a change in $\beta$ to also change the leaves visited in the evaluation.
Using the city lidar point cloud from Fig.~\ref{fig:bike_city} and $\beta=0.5$, we generate sample points by uniformly sampling points on the top and bottom of the bounding box, and sphere tracing rays along the vertical coordinate axis (z-axis in this case) until the ray either misses the surface or has a smooth distance of 0.1 or less.
Then, with these sample points on the 0.1 isosurface, we trace small rays in the same vertical direction, and whenever the evaluation switches from a far field to near field expansion in a bounding box (i.e., traverses inside a new bounding box), we also measure the smooth distance using $\beta=0.5001$ and compute the discrepancy in smooth distance and the cosine between gradient vectors.
We collect 259 field switch points where $\beta=0.5001$ traverses less of the BVH than $\beta=0.5$, and observe that the mean smooth distance discrepancy is $7.6 \times 10^{-5}$ and the maximum discrepancy is $6.6 \times 10^{-3}$; for gradients, the mean cosine is 1 and the minimum cosine is 0.9974.
These results are sufficient for our rigid body simulations with geometry orders of magnitude larger than these discrepancies, though for applications with more demanding smoothness requirements, smooth Barnes-Hut approximations are a potential area for future work.

The remaining parameters $\alpha$ and $\alpha_U$ are associated with the geometry of $\Md$.
In cases where we want a function resembling our underlying geometry, we want to select an $\alpha$ that is high enough to produce a good approximation of the surface, but low enough to prevent numerical problems.
Since we can interpret $\alpha$ as a metric, we can select $\alpha$ based on the density of our geometry.
One heuristic we found to be a useful starting point for edge and triangle meshes is to set $\alpha$ to be the reciprocal of the minimum edge length.
For points, we set $\alpha$ to be 100 times the length of the bounding box diagonal.
These results can be refined by sphere tracing the zero isosurface and tweaking the results until the surface looks satisfactory.
$\alpha_U$ is essentially an upper bound on $\alpha$ and can be determined using this same procedure, which allows it to be used in low-$\alpha$ scenarios as well.
$\alpha_q$ is very similar to $\alpha$ but is related to the geometry of $\Mq$, and can be selected using the same process.
A fully automated solution for selecting these parameters is left as future work.

\subsection{Performance Benchmark}\label{sec:benchmark}

We performed a large-scale performance evaluation of our method on the Thingi10K dataset~\cite{Thingi10K}.
For each model, we created 3 data meshes: all of the model's vertices $V$, all of its edges $E$, and all of its triangles $F$.
We then scaled and translated each of these meshes so that their bounding boxes were centered at $[0.5, 0.5, 0.5]^\top$, with a bounding box diagonal of 0.5.
Then, we evaluated $\hat{d}$ from each mesh to each of the voxel centers of a $100 \times 100 \times 100$ grid between $[0, 0, 0]^\top$ and $[1, 1, 1]^\top$, in parallel using 16 threads.
The results are reported in Fig.~\ref{fig:benchmark}.
We can see that performance is proportional to the number of visited leaves (i.e., the number of primitive distance calculations and far field expansions employed), and the percentage of visited leaves drops significantly as meshes increase in size.
These results show that our method is quite scalable, and is very good at handling large meshes.
Just like in Section~\ref{sec:spheretrace}, points tend to perform much better than edges and triangles, but now we can see why this is the case.
Point clouds have lower variances than edge meshes and triangle meshes in their average leaves visited percentage --- for example, even at at 1000 edges/triangles, several meshes require the traversal to visit well over 10\% of the leaves in their BVH on average.
Also, points generally visit far fewer leaves --- point cloud tests visit at most 70 leaves, while edge meshes and triangle meshes can require over 2000 visited leaves.
As we can see from the first graph, less leaves visited corresponds to faster query times, and so points are empirically faster than the other two types of meshes.

We also ran this benchmark on a GPU --- see Appendix~\ref{sec:gpu_benchmark} for the results.

\begin{figure*}
  \includegraphics[width=\textwidth]{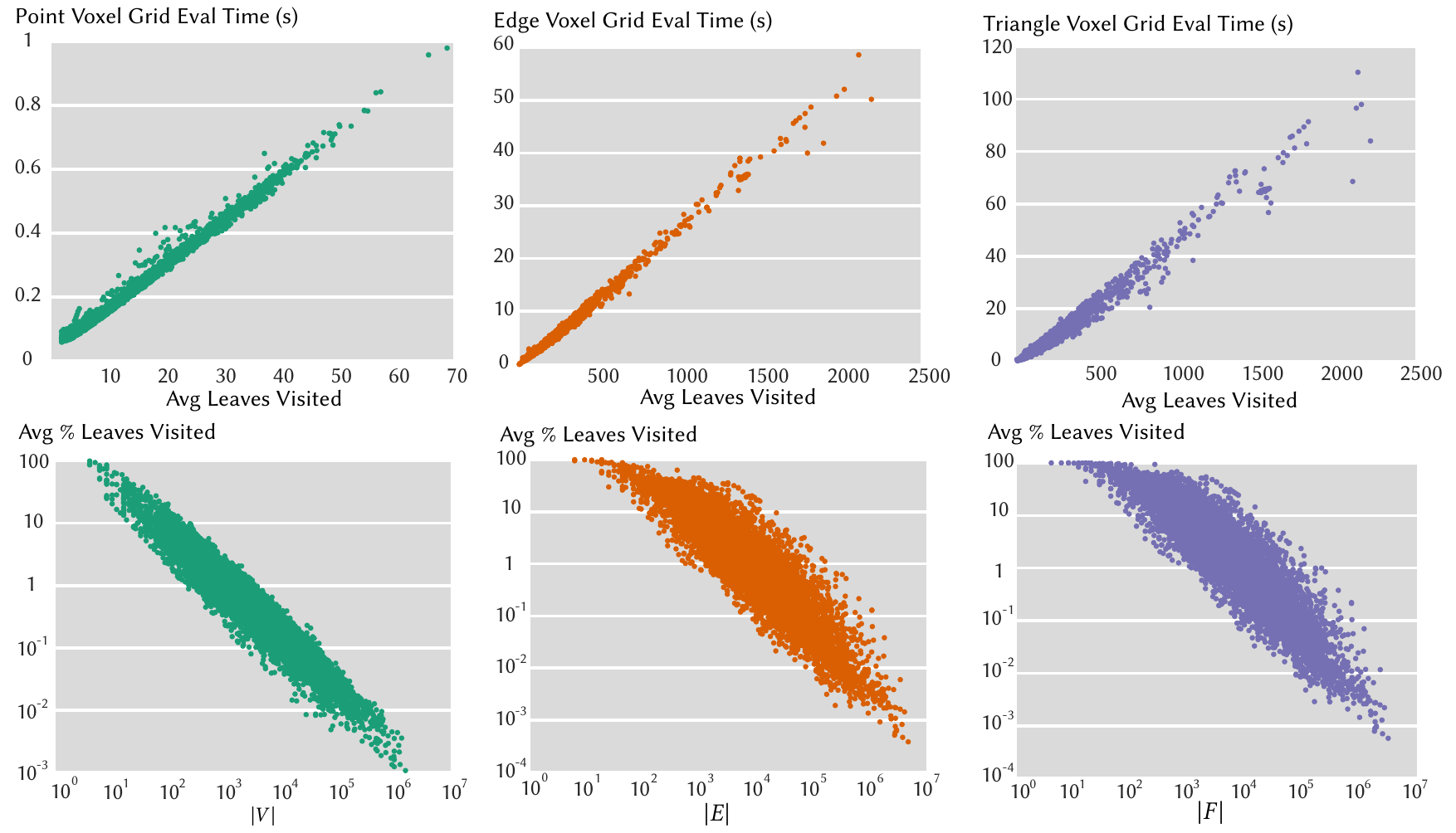}
  \caption{Results of the Thingi10K benchmark, using the vertices $V$, edges $E$, and triangles $F$ of each mesh in the dataset to evaluate queries on a $100 \times 100 \times 100$ voxel grid. The top row shows that running time is linearly proportional to the number of visited leaves for all mesh types, and the percentage of visited leaves drops sharply as mesh size increases. Furthermore, points tend to perform much better than edges and triangles.}
  \label{fig:benchmark}
\end{figure*}

\subsection{Rigid Body Collisions}\label{sec:rb}
One popular application of distance functions is in collision resolution, where they can be used as an intersection-free optimization constraint.
To focus on the distance function constraint rather than the mechanics, we simulate frictionless rigid body contact, using a variety of configurations and a variety of geometry.
Each time step of our simulation is driven by an incremental potential energy based on \citet{Ferguson:2021:RigidIPC} with a single smooth distance constraint.
For example, the optimization problem corresponding to a single object $\Mq$ colliding with one other static object $\Md$ in the scene is
\begin{equation}\label{eq:opt}
  \begin{aligned}
    \min_{\bp, \btheta} \quad & E(\bp, \btheta) \\
    s.t. \quad & \hat{d}(\Md,\Mq(\bp,\btheta)) \ge 0,
  \end{aligned}
\end{equation}
where $\bp$ and $\btheta$ are vectors in $\mathbb{R}^3$ describing the world space position and orientation of $\Mq$, respectively, and $E$ is an objective function whose unconstrained minimizer is equivalent to an implicit Euler time step in position and an exponential Euler time step in orientation.
We solved this optimization problem using a primal-dual interior-point solver~\cite{Nocedal2006} written in C++, where every iteration produces a feasible point, and we replace the Hessian block of the primal-dual system with an identity matrix (similar to gradient descent on the optimization problem's Lagrangian).
For multi-object simulations, each object in the scene has an associated $\alpha$ and $\alpha_U$ to use in smooth distance evaluations, and every pairwise smooth distance constraint is combined into a single constraint using LogSumExp with the maximum $\alpha$ among all objects in the scene.
Inertial quantities were computed using the underlying geometry.
We did not implement continuous collision detection, so we use relatively small time steps in our examples.

Since we wrap every pairwise primitive distance constraint into a single constraint, a much simpler alternative to smooth distances is the exact minimum of all pairwise distances.
However, as we discussed in Section~\ref{sec:method}, the exact minimum is $C^0$, which is undesirable (other methods that use exact distance use multiple constraints, which is better behaved than a single exact minimum constraint~\cite{Nocedal2006}).
We demonstrate its poor behaviour in two 2D examples where a point is dropped into a V-shaped bowl, and compare the results between exact minimum distance and smooth distance constraints (Fig.~\ref{fig:2d_exact_smooth}).
In both simulations, exact distances perform qualitatively worse than smooth distances, because they do not smooth the geometry and must deal with the sharp gradient change after passing the medial axis passing vertically through the base of the bowl.

\begin{figure}
  \includegraphics[width=\columnwidth]{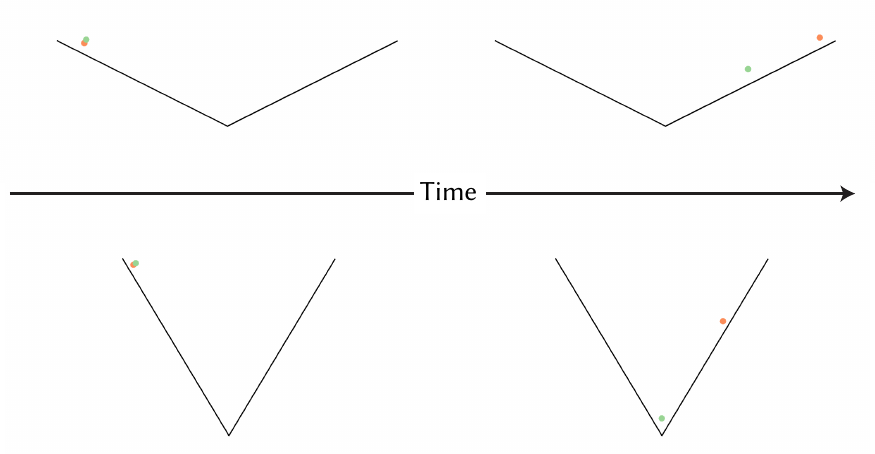}
  \caption{A comparison of exact and smooth distance constraints in rigid body simulations of a point mass dropping onto two different V-shaped bowls. In the shallow bowl (top), exact distances (green) lose a lot of kinetic energy after the sharp base, while smooth distances (orange) allow the point to roll out of the bowl. In the deep bowl (bottom), exact distances simply get stuck, while smooth distances are able to continue past the sharp base for a time.}
  \label{fig:2d_exact_smooth}
\end{figure}

We now show a variety of simulations in three dimensions using smooth distances.
First, we show two rigid bunnies colliding together and falling onto a bumpy ground plane as a simple test (Fig.~\ref{fig:bunny_bunny}).
The two bunnies bounce off each other and fall down, without intersections.
Now we depart from traditional examples and simulate co-dimensional geometry.
We can simulate the edges of a faceted icosphere falling into a net-like bowl (Fig.~\ref{fig:ico_bowl}), where it rolls to the other side and falls back into the bowl.
Both of these meshes are represented as edge meshes in the simulation, and we see again that there is no interpenetration.
Similarly, we can show a sphere roll down a slide represented as an edge mesh (Fig.~\ref{fig:sphere_slide}).
In a more complex scenario, we can mix primitive types in a mesh and still compute distances.
We can take a sphere mesh, attach some spikes represented as edges to it, and then simulate this shape falling into the net bowl (Fig.~\ref{fig:spikeball_bowl}).
We see one of the spikes hit the lip of the bowl and the spiky ball rolling up the other side of the bowl.

We can go even further and use our function to simulate cases that have not been well-defined in previous methods, like point cloud collisions.
Although recent work has simulated contact with highly co-dimensional geometry~\cite{Li2021CIPC,Ferguson:2021:RigidIPC}, the inflated isosurfaces provided by $\hat{d}$ allow us to close the surface defined by the point cloud without a surface reconstruction preprocess.
We demonstrate this by tossing a trefoil knot, represented as a piecewise linear curve, into a ring toss game represented as a point cloud (Fig.~\ref{fig:trefoil_ringtoss}).
In larger examples, we simulate an octopus triangle mesh (Fig.~\ref{fig:octopus_lidar}) and a point cloud bunny (Fig.~\ref{fig:bunny_lidar}) sliding and rolling down point cloud terrain acquired from a lidar scanner.
We also simulated a motorcycle jumping onto a lidar-acquired point cloud street (Fig.~\ref{fig:bike_city}).

\begin{figure}
  \includegraphics[width=\columnwidth]{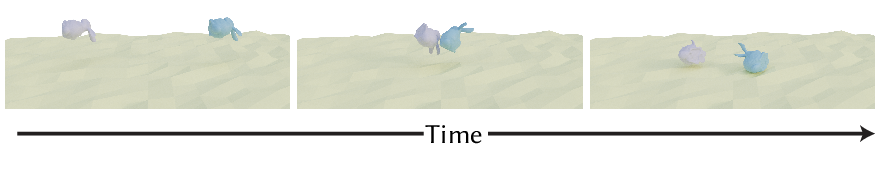}
  \caption{Two bunnies colliding and falling onto a bumpy plane, where they bounce along the surface.}
  \label{fig:bunny_bunny}
\end{figure}

\begin{figure}
  \includegraphics[width=\columnwidth]{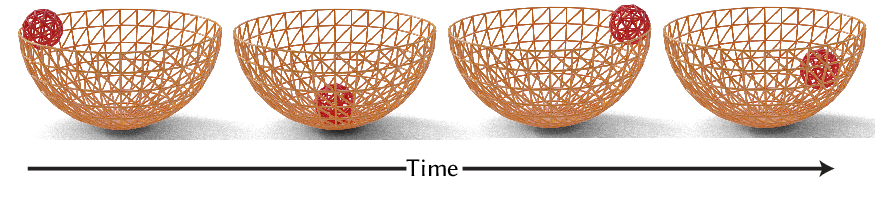}
  \caption{An icosphere edge mesh is dropped into a hemispherical bowl which is also represented as an edge mesh, where it rolls around in the bowl. Note that both meshes are only rendered as triangle meshes, and are represented in the simulation as edge meshes.}
  \label{fig:ico_bowl}
\end{figure}

\begin{figure}
  \includegraphics[width=\columnwidth]{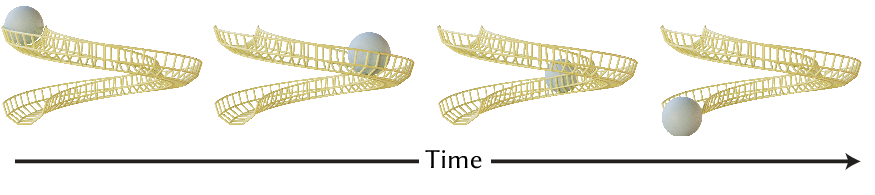}
  \caption{A triangle mesh sphere rolls down a twisting slide made of edges.}
  \label{fig:sphere_slide}
\end{figure}

\begin{figure*}
  \includegraphics[width=\textwidth]{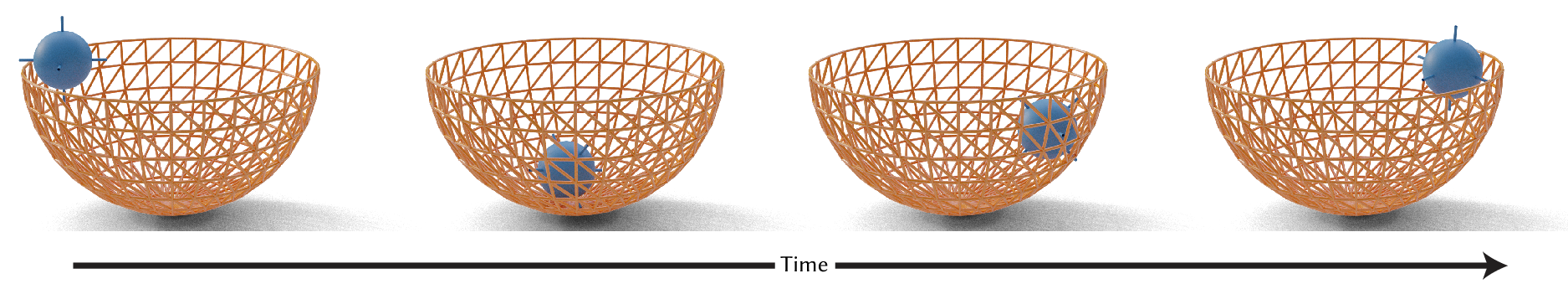}
  \caption{A spiky sphere made from a triangle mesh and edges for the spikes, falls into a bowl edge mesh and rolls up the side of the bowl, with the spikes hitting the bowl's wires. Due to low $\alpha$ values, the spikes rarely poke through the bowl.}
  \label{fig:spikeball_bowl}
\end{figure*}

\begin{figure*}
  \includegraphics[width=\textwidth]{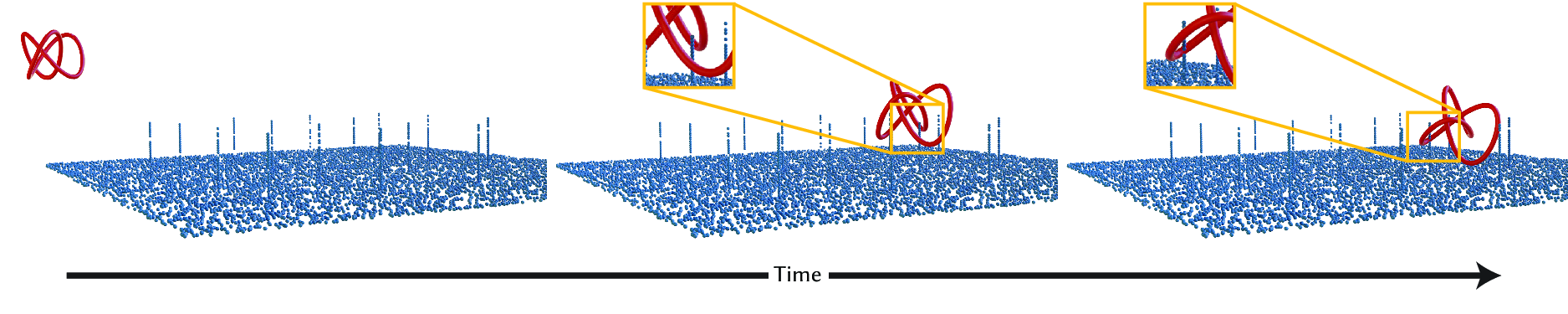}
  \caption{A trefoil knot is thrown into a ring toss game, represented by point samples, where it hits a ring spike and slides around it before resting on the base.}
  \label{fig:trefoil_ringtoss}
\end{figure*}

\begin{figure*}
  \includegraphics[width=\textwidth]{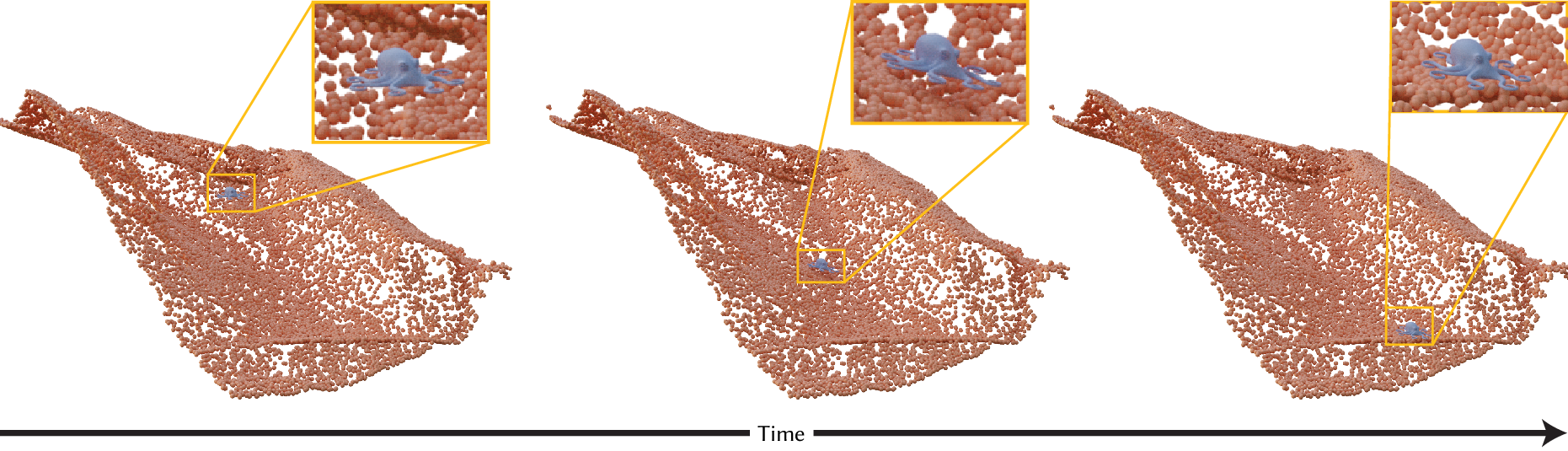}
  \caption{An octopus triangle mesh slides through terrain defined by a lidar point cloud.}
  \label{fig:octopus_lidar}
\end{figure*}

\begin{figure}
  \includegraphics[width=\columnwidth]{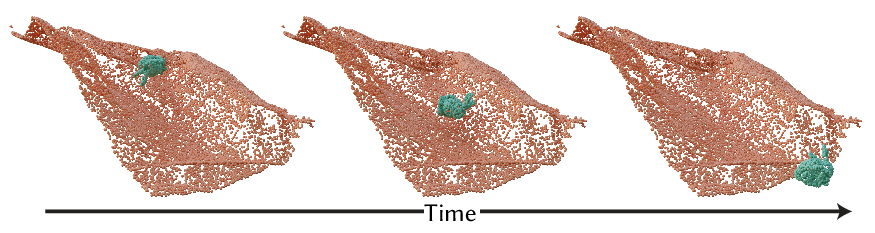}
  \caption{A bunny point cloud rolls and bounces along a lidar point cloud hill.}
  \label{fig:bunny_lidar}
\end{figure}

We summarize the performance results of the 3D simulations in Table~\ref{tbl:sim}, in particular the average time taken to evaluate $\hat{d}(\Md,\Mq)$.
Each evaluation was run with 28 CPU threads parallelizing over per-primitive distance computations.

\begin{table*}
  \centering
  \rowcolors{2}{white}{CornflowerBlue!25}
  \caption{Timing results from the simulations presented in the paper. We give information about the data mesh and query mesh, as well as the average distance evaluation time and the simulation time step size. Each distance evaluation was parallelized over 28 CPU threads.}
  \begin{tabular}{l r r r r r r r r r}
    \toprule
    $\Md$ & \textit{$\Md$ Type} & $|F|$ & $\alpha$ & $\Mq$ & \textit{$\Mq$ Type} & $|\bF|$ & $\alpha_q$ & \textit{Avg. Dist Time (ms)} & \textit{dt (s)} \\
    Bowl & Edges & 736 & 50 & Icosphere & Edges & 120 & 50 & 2.72934 & 1/200 \\
    Bowl & Edges & 736 & 50 & Spiky Sphere & Tris+Edges & 966 & 20 & 34.7789 & 1/1000 \\
    Terrain & Points & 6591087 & 100 & Bunny & Points & 3485 & 20 & 250.0031 & 1/100 \\
    Terrain & Points & 6591087 & 100 & Octopus & Tris & 4432 & 1000 & 189.5989 & 1/100 \\
    Bunny & Tris & 6966 & 200 & Bunny & Tris & 6966 & 200 & 92.5717 & 1/400 \\
    Bumpy Plane & Tris & 800 & 200 & Bunny & Tris & 6966 & 200 & 470.3857 & 1/400 \\
    Slide & Edges & 615 & 100 & Sphere & Tris & 960 & 100 & 33.5601 & 1/100 \\
    Ring Toss & Points & 12400 & 20 & Trefoil & Edges & 100 & 20 & 1.4727 & 1/200 \\
    City Street & Points & 6747648 & 50 & Motorcycle & Tris & 8800 & 100 & 581.2362 & 1/100 \\
    \bottomrule
  \end{tabular}
  \label{tbl:sim}
\end{table*}

\section{Limitations and Future Work}
We have demonstrated a variety of applications of our method, but it is not a panacea.
For example, our method tends to be overly conservative in contact resolution, and may need to be augmented with a more exact method to accurately handle tight contacts like a nut screwing into a bolt.
Furthermore, it can be difficult to select parameters optimally.
Although we can select $\beta$ satisfactorily, $\alpha$ and related parameters require some extra care to select in a way that does not produce \texttt{inf} values and yields highly accurate solutions.
One possibility would be to analyze the distribution of mesh edge lengths to pick a reasonable $\alpha$ that is robust to noise in the distribution.
Furthermore, our weight functions, while effective, are designed heuristically, and it would be interesting to see if there is a more exact way to define them based on the underlying geometry as well as the connectivity.
They are also only applied to the data mesh in the current formulation because our weights require a global minimizer for computing barycentric coordinates of the closest point.
Generalizing weights to work for both data and query meshes, perhaps through a more theoretically grounded weight function, is interesting future work.
Another related direction of future work is analyzing the distribution of a point cloud to eliminate concentration artifacts caused by non-uniform point distributions.
Although this is not a problem for point clouds that come from most lidar scanners, it is a useful property to ensure the generalizability of our method.
Working with noisy point clouds also makes it desirable to eliminate such noise from the dataset entirely.
Smooth distances do not amplify noisy data like exact distances, but the noise still exists in the underlying dataset, which can be problematic particularly at high values of $\alpha$ where the noise begins to separate from the main body into small satellite regions.

Another avenue for future work is applying our method to deforming meshes in both elastodynamics simulations and friction simulation.
Due to $\alpha$'s dependence on the underlying mesh, it would also need to change over time, so it would be interesting to treat $\alpha$ as part of the simulation's evolving state.
Friction is particularly interesting because it involves computing a tangent plane at each contact point, and since contact points are implicitly wrapped in the constraint, such a formulation would require a smooth friction computation over every primitive in the data and query mesh. 
We also believe that there are interesting applications in purely particle-based simulation methods like spherical particle hydrodynamics that have no background grid to store implicit functions, where our method could be dropped in to render the fluid surface or induce collision forces onto the particles from other objects.

Another useful application is to integrate smooth distances into existing simulation frameworks such as IPC~\cite{Li2020} and application extensions~\cite{Fang2021IDP}, replacing multiple exact distance constraints with a single smooth distance constraint.
Although these methods can be implemented using first-order derivative information via gradient descent, their efficient implementations require Hessians to use in Newton iterations.
Smooth distance Hessians inherit discontinuities at projected primitive boundaries from their constituent pairwise primitive distances (though $C^1$ functions are not nearly as problematic as $C^0$ functions --- for example, see the Supplemental of \citet{Li2020}), but the more pressing issue is computing and storing them efficiently, since they are large dense matrices.
It would be worthwhile to investigate how these Hessians can be approximated using techniques like hierarchical matrices.

The explosion of popularity in implicit functions through reconstruction work such as neural SDFs means that improved computational tools for implicit functions are on the horizon, which can also be used to improve the versatility of smooth distances.
For example, computing inertial quantities by directly integrating over the volume enclosed by the zero isosurface would allow simulation frameworks to forget about the underlying geometry almost entirely.

The parallelizability of our method also makes it interesting to consider its integration in purely GPU-based applications.
In particular, there are many possible locations for parallelization; for example, while our current implementation parallelizes the query primitive distance computations, an alternative approach could parallelize the traversals in each per-query primitive evaluation.
We believe analyzing the tradeoffs of these sort of approaches in the style of Halide~\cite{Ragan2013} is promising future work.

\section{Conclusion}

We have presented a smooth distance function that can be efficiently evaluated in such a way that it conservatively estimates the distance to the underlying geometry.
Our function works on various types of geometry that can be encountered: points, line segments, and triangles, and utilizes weight functions to eliminate isosurface artifacts.
We have benchmarked our method on the Thingi10K dataset and shown that it scales quite well for large geometry.
It enables applications such as rigid body contact with lidar point cloud data.
We believe this geometric abstraction is very powerful, and due to its basic preprocessing requirements (simply building a BVH), it can provide a lightweight yet versatile augmentation to the underlying geometry.

\begin{acks}
This research has been funded by in part by NSERC Discovery (RGPIN-2017-05524), Connaught Fund (503114), Ontario Early Researchers Award (ER19-15-034), Gifts from Adobe Research and Autodesk, and the Canada Research Chairs Program.

The authors would like to thank Hsueh-Ti Derek Liu, Silvia Sell\'an, Ty Trusty, Yixin Chen, and Honglin Chen for proofreading, and Silvia Sell\'an and Ty Trusty for help in figure rendering.
The motorcycle, deer, and octopus models are from the Thingi10K dataset; the bunny is from the Stanford 3D Scanning Repository; the city lidar point cloud is from the Toronto-3D dataset; and the terrain point cloud is from OpenTopography.
\end{acks}

\bibliographystyle{ACM-Reference-Format}
\bibliography{smooth-distance}

\appendix

\section{Proof of Underestimate Property}\label{sec:underestimate_proof}

Here we prove that smooth distances underestimate the true distance, when each distance contribution has an associated weight function.

\begin{theorem}
Let $\Md = (V,F)$ be the data mesh, and let $g$ be a query primitive, and each $f_i \in F$ has an associated weight function $w_i(g)$, where $1 \le w_i(g) \le A$ for some constant $A$.
Suppose $d_{min} = \min_{f_i \in F} d(f_i, g)$.
Then, defining $\hat{d}(\Md, g)$ as in Eq.~\ref{eq:lse} (but using a general query primitive $g$ instead of a point), we have $d_{min} \ge \hat{d}(\Md, g) \ge d_{min} - \frac{\log A|F|}{\alpha}$ for all $\alpha > 0$.
\end{theorem}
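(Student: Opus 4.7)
The plan is to sandwich the sum $S = \sum_{f_i \in F} w_i(g)\exp(-\alpha d_i)$ inside the logarithm between two simple expressions involving $\exp(-\alpha d_{min})$, and then apply $-\frac{1}{\alpha}\log(\cdot)$ (which is monotonically decreasing for $\alpha>0$) to obtain the two stated bounds. The whole argument is just two one-line inequalities plus bookkeeping, so there is no single hard step; the proof is essentially the standard LogSumExp sandwich, adapted to accommodate the weights $w_i$.

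First, I would establish the lower bound on $\hat{d}$. Since each $w_i(g)\le A$ and $-\alpha d_i \le -\alpha d_{min}$ for every $i$ (because $d_i \ge d_{min}$ and $\alpha>0$), each summand satisfies $w_i(g)\exp(-\alpha d_i)\le A\exp(-\alpha d_{min})$. Summing over the $|F|$ primitives gives $S \le A|F|\exp(-\alpha d_{min})$. Applying $-\frac{1}{\alpha}\log$ then yields
\begin{equation*}
\hat{d}(\Md,g) \;=\; -\tfrac{1}{\alpha}\log S \;\ge\; -\tfrac{1}{\alpha}\log\!\bigl(A|F|\exp(-\alpha d_{min})\bigr) \;=\; d_{min} - \tfrac{\log(A|F|)}{\alpha}.
\end{equation*}

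Second, I would establish the upper bound $\hat{d}(\Md,g)\le d_{min}$. Let $i^\star \in \arg\min_i d(f_i,g)$, so $d_{i^\star}=d_{min}$. Since every term of $S$ is nonnegative and $w_{i^\star}(g)\ge 1$, we can lower-bound $S$ by the single $i^\star$ term:
\begin{equation*}
S \;\ge\; w_{i^\star}(g)\exp(-\alpha d_{min}) \;\ge\; \exp(-\alpha d_{min}).
\end{equation*}
Applying $-\frac{1}{\alpha}\log$ gives $\hat{d}(\Md,g)\le d_{min}$, which is the conservative (underestimate) property used throughout the paper.

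Combining the two bounds yields the desired chain $d_{min} \ge \hat{d}(\Md,g) \ge d_{min}-\frac{\log(A|F|)}{\alpha}$, completing the proof. The only subtlety worth flagging is that the lower bound on each $w_i$ is essential for the conservative half: if some weight were allowed to dip below $1$, the single-term bound on $S$ would fail and $\hat{d}$ could exceed $d_{min}$, which is precisely why the paper introduces the global scale factor $A$ in Section~\ref{sec:weights} to enforce $A\tilde{w}_i \ge 1$.
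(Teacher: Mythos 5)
Your proof is correct and uses the same LogSumExp sandwich argument as the paper: bound the sum $S$ below by the single smallest-distance term using $w_{i^\star}\ge 1$ (giving $\hat{d}\le d_{min}$), and above by $A|F|\exp(-\alpha d_{min})$ using $w_i\le A$ and $d_i\ge d_{min}$ (giving the $-\frac{\log(A|F|)}{\alpha}$ defect). The paper merely writes the same two bounds as a single chain of inequalities starting and ending with $d_k$, whereas you separate them cleanly; the mathematical content is identical.
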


\begin{proof}
Let $k = \argmin_{f_i \in F} d(f_i, g_j)$.
Using this notation, $d_{min} = d_k$.
Then,
\begin{align*}
  d_k &= -\frac{1}{\alpha} \log \left( \exp \left( -\alpha d_k \right) \right) \\
      &\ge -\frac{1}{\alpha} \log \left( \exp \left( -\alpha d_k \right) \right) - \frac{1}{\alpha} \log w_k(g) \tag{$w_k(g) \ge 1$} \\
      &= -\frac{1}{\alpha} \log \left( w_k(g) \exp \left( -\alpha d_k \right) \right) \\
      &\ge -\frac{1}{\alpha} \log \left( \sum_{f_i \in F} w_i(g) \exp(-\alpha d(f_i, g)) \right) \tag{$\hat{d}(\Md, g)$} \\
      &\ge -\frac{1}{\alpha} \log \left( \sum_{f_i \in F} A \exp(-\alpha d(f_i, g)) \right) \tag{$w_i(g) \le A$} \\
      &\ge  -\frac{1}{\alpha} \log\left( \sum_{f_i \in F} A \exp \left( -\alpha d_k \right) \right) \\
      &= d_k - \frac{\log A|F|}{\alpha}
\end{align*}
Therefore, we have $d_{min} \ge \hat{d}(\Md, g) \ge d_{min} - \frac{\log A|F|}{\alpha}$.
\end{proof}

\section{Exact Distance Formulation}\label{sec:exact_dist}

Here we discuss our distance formulation between simplices $f$ and $g$.
Denoting barycentric coordinates of $f$ as $\bphi$ and $g$ as $\blambda$, the points on each simplex referenced by the barycentric coordinates are denoted as $f(\bphi)$ and $g(\blambda)$ respectively.
(For points, the barycentric coordinate vector is 0-dimensional, so the barycentric coordinate vector can be omitted.)
For example, if $f$ is a triangle consisting of 3 vertices $\bv_0, \bv_1, \bv_2$, and $\bphi = [ \phi^1, \phi^2 ]^\top$, $f(\bphi) = (1-\phi^1 - \phi^2)\bv_0 + \phi^1\bv_1 + \phi^2\bv_2$.
Furthermore, we are restricted to convex combinations of $\bv_i$ (i.e., $0 \le \phi^i \le 1$, and $\sum_i \phi^i \le 1$).
Since any point on a simplex can be referenced using barycentric coordinates, we can determine the closest point pair on $f$ and $g$ by minimizing the squared distance between all points on each simplex:
\begin{equation}\label{eq:distsq}
d^2(f, g) = \min_{\bphi^*, \blambda^*} \lVert f(\bphi^*) - g(\blambda^*) \rVert^2,
\end{equation}
where $\bphi$ and $\blambda$ are the argmin of the problem.
Combined with the aforementioned constraints on $\blambda$ and $\bphi$, we have a quadratic program with linear constraints (and in the case of points, the problem simplifies to the $L^2$ norm).

We are also interested in taking derivatives of this distance.
It is well-known that quadratic programs are differentiable~\cite{hadigheh2007sensitivity}, and in this case, the distance gradient (with respect to $g$) is equivalent to the distance gradient of the closest point pair $f(\bphi)$ and $g(\blambda)$ with respect to $g(\blambda)$, which is
\begin{equation}\label{eq:dgrad}
  \nabla d(f, g) = \begin{cases}
    \frac{g(\blambda) - f(\bphi)}{\lVert g(\blambda) - f(\bphi) \rVert} & d(f, g) \ne 0 \\
    0 & \text{otherwise.}
  \end{cases}
\end{equation}
Although there is a gradient discontinuity when $f$ and $g$ intersect, this is not an issue in our simulation applications, because smooth distance constraints ensure that the underlying geometries will never touch.

This formulation is also tied to the closest point projection function used to define weight functions in Section~\ref{sec:weights}.
Given a closest point projection $\bpi$ onto simplex $f$, $\bpi(g) = \bpi(g(\blambda)) = f(\bphi)$.

\section{Weight Functions}\label{sec:weight_app}
Here we provide more details on the construction and storage of weight functions $w_i(\bq)$ for edges and triangles.

\subsection{Pointwise Case Analysis}
To identify appropriate point constraints for the weight polynomials, we derive the \textit{exact} weights in various cases depending on the location of the closest point on a data mesh $\Md = (V,F)$ to point $\bq$.
First, suppose the closest point is on a vertex $\bv_k \in V$.
Looking at $\hat{d}$, and assuming $\alpha$ is large, we have
\begin{align*}
  \hat{d}(\Md, \bq) &= -\frac{1}{\alpha} \log \left( \sum_{f_i \in F} w_i(\bq) \exp(-\alpha d(f_i, \bq)) \right) \\
                    &\approx -\frac{1}{\alpha} \log \left( \left( \sum_{f_i \in \mathcal{N}_k} w_i(\bq) \right) \exp(-\alpha d(\bv_k, \bq)) \right) \\
                    &= d(\bv_k, \bq) - \underbrace{\frac{\log \left( \sum_{f_i \in \mathcal{N}_k} w_i(\bq) \right)}{\alpha}}_{r_k}
\end{align*}
In the above derivation, $\mathcal{N}_k$ denotes the set of one-ring neighbours of $\bv_k$, and the second line is a consequence of high $\alpha$ making the other terms in the summation negligible.
Our goal is to ensure that $r_k = 0$, which is equivalent to the condition $\sum_{f_i \in \mathcal{N}_k} w_i(\bq) = 1$.
We can accomplish this by setting $w_i(\bq) = \frac{1}{|\mathcal{N}_k|}$ if $\bq$'s closest point to $f_i$ is its vertex $\bv_k$ (which will be the case for all $f_i \in \mathcal{N}_k$ when $\bq$'s closest point on $\Md$ is $\bv_k$).
When $\Md$ is a triangle mesh and the closest point is along an edge $(k, \ell)$, we can use an identical argument to derive $w_i(\bq) = \frac{1}{|\mathcal{N}_{k\ell}|}$ where $\mathcal{N}_{k\ell}$ denotes the set of triangles incident on $(k,\ell)$.

Now suppose the closest point is on a simplex $f_i \in F$ (but not on one of its faces, in which case the earlier discussion applies).
Again assuming $\alpha$ is large, we have
\begin{align*}
  \hat{d}(\Md, \bq) &= -\frac{1}{\alpha} \log \left( \sum_{f_i \in F} w_i(\bq) \exp(-\alpha d(f_i, \bq)) \right) \\
                    &\approx -\frac{1}{\alpha} \log \left( w_i(\bq) \exp(-\alpha d(f_i, \bq)) \right) \\
                    &= d(f_i, \bq) - \frac{\log w_i(\bq)}{\alpha}
\end{align*}
In this case, we can simply set $\tilde{w}_i(\bq) = 1$.

Of course, we cannot directly use these exact weights, since they are not smooth and do not satisfy the conservative property $w_i(\bq) \ge 1$.
Instead, we use these cases as guidelines for point constraints in building non-conservative polynomial weights $\tilde{w}_i$ and computing an appropriate scale factor $A$ to ensure the conservative property holds.

To simplify the notation in the remainder of this section, we will work with barycentric coordinates of the closest point projection of $\bq$ onto $f_i$ (i.e., $\tilde{w}_i(\bphi_i)$).

\subsection{Edge Weights}
Edges have a single barycentric coordinate, so the weight function of an edge $f_i$ in terms of barycentric coordinates is $\tilde{w}_i(\phi_i)$.
The weight function is a 4th order polynomial, with point constraints $\tilde{w}_i(0) = \frac{1}{|\mathcal{N}_{i_0}|}$, $\tilde{w}_i(1) = \frac{1}{|\mathcal{N}_{i_1}|}$, and $\tilde{w}_i(0.5) = 1$, and derivative constraints $\tilde{w}_i(0) = \tilde{w}_i(1) = 0$.
Using the polynomial coefficients as unknowns, the five above equations become a linear system that can be solved for unique coefficients that satisfy the constraints.
We have set the three extrema of this quartic function by construction, which allows us to cheaply compute $A = \max_k |\mathcal{N}_k|$.

\begin{figure*}[t!]
  \includegraphics[width=\textwidth]{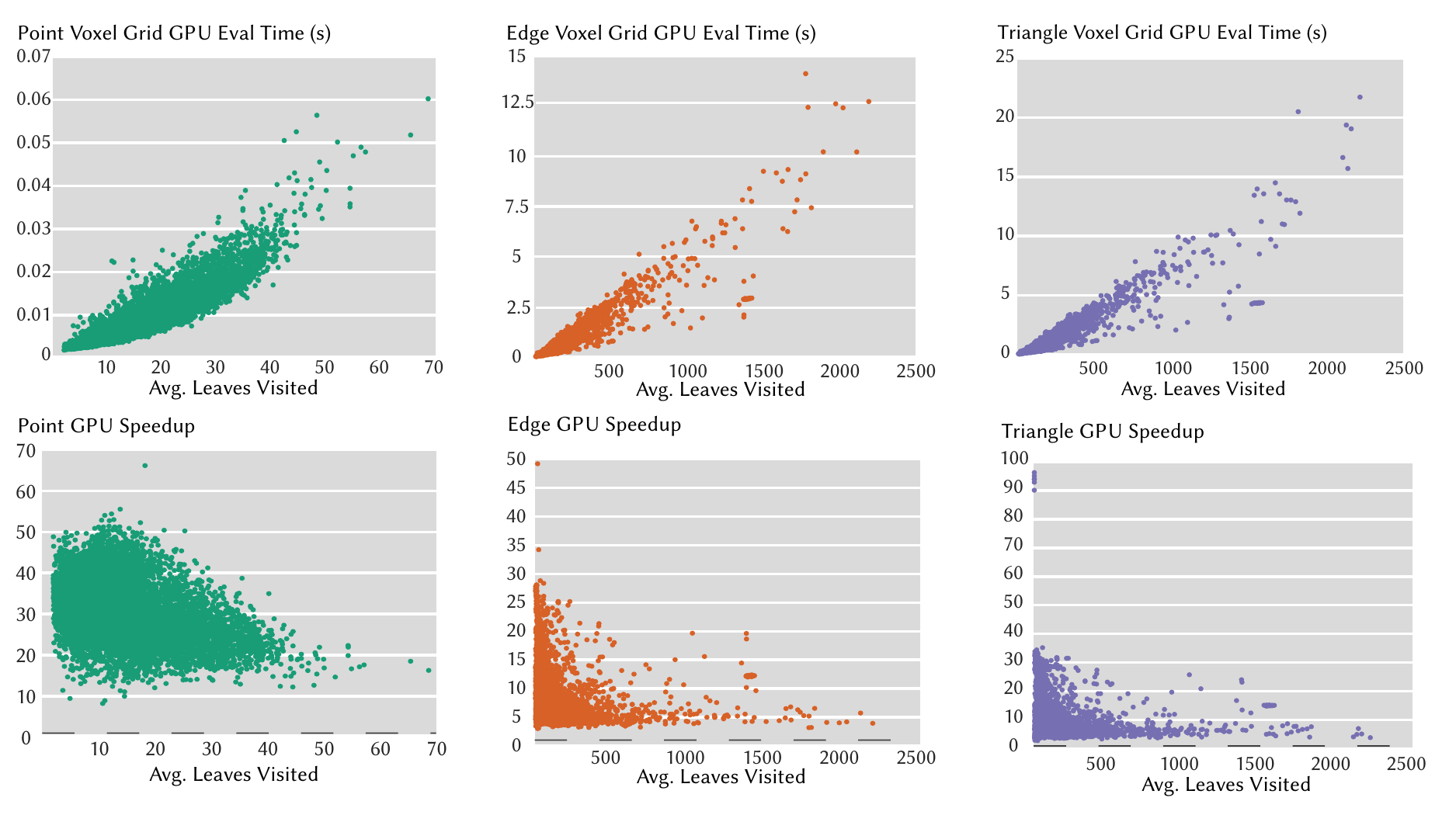}
  \caption{Results of the Thingi10K benchmark run on the GPU, with the same test as in Fig.~\ref{fig:benchmark}. The top row shows times and average leaves visited, while the bottom row shows speedups relative to the CPU benchmark. The dotted lines represent a speedup of 1 (i.e., identical performance on the CPU and GPU).}
  \label{fig:gpu_benchmark}
\end{figure*}

\subsection{Triangle Weights}
Triangles have two barycentric coordinates, so the weight function of a triangle $f_i$ in terms of barycentric coordinates is $\tilde{w}_i(\phi_i^1, \phi_i^2)$.
In order to satisfy the constraints, $\tilde{w}_i$ is a 7th order polynomial, with 36 coefficients.
Since we must store a set of coefficients for each triangle in a mesh, we additionally require that the function is symmetric, $\tilde{w}_i(\phi_i^1, \phi_i^2) = \tilde{w}_i(\phi_i^2, \phi_i^1)$, so that we can roughly halve the number of coefficients that must be stored.

We use 10 point constraints: one per vertex, two per edge (spaced equally along their lengths), and one for the triangle barycenter.
In order to ensure that $\tilde{w}_i$ is symmetric, we assign the same value at each vertex constraint and at each edge constraint, based on the maximum vertex and edge valence, respectively.
More concretely, we first find $v_i = \max \{ |\mathcal{N}_{i_0}|, |\mathcal{N}_{i_1}|, |\mathcal{N}_{i_2}| \}$ and $e_i = \max \{ |\mathcal{N}_{i_{01}}|, |\mathcal{N}_{i_{12}}|, |\mathcal{N}_{i_{20}}| \}$ (noting that $v_i \ge e_i$),
and assign $\tilde{w}_i(0,0) = \tilde{w}_i(1,0) = \tilde{w}_i(0,1) = \frac{1}{v_i}$, and $\tilde{w}_i(1/3,0) = \tilde{w}_i(2/3,0) = \tilde{w}_i(1/3,2/3) = \tilde{w}_i(2/3,1/3) =\tilde{w}_i(0,1/3) = \tilde{w}_i(0,2/3) = \frac{1}{e_i}$.
Without carefully setting the final barycenter constraint, it is possible to produce spurious local extrema in regions other than the vertices, edges, and barycenter, which is problematic for determining an appropriate $A$, and can even lead to phenomena like negative weights.
We have found that a barycenter constraint of $\tilde{w}_i(1/3,1/3) = \frac{v_i - 1}{v_i}$ prevents this spurious extrema issue; although the barycenter weight is no longer 1, the weight at the barycenter is still significantly higher than at the vertices, so the deviation from exact weights is acceptable.
Then, like with edges, we have $A = \max_i v_i$.

The normal derivative constraints, are more complex than for edges: we must enforce constraints on the lines $\phi_i^1 = 0$, $\phi_i^2 = 0$, and $\phi_i^1 + \phi_i^2 = 1$ within the barycentric triangle.
Taking care to ensure symmetry, these constraints are $\frac{\partial}{\partial \phi_i^1} \tilde{w}_i(0,t) = \frac{\partial}{\partial \phi_i^2} \tilde{w}_i(t,0) = 0$ for all $t \in [0,1]$, and
$\left( \frac{\partial}{\partial \phi_i^1} + \frac{\partial}{\partial \phi_i^2} \right) \tilde{w}_i(t,1-t) = \left( \frac{\partial}{\partial \phi_i^1} + \frac{\partial}{\partial \phi_i^2} \right) \tilde{w}_i(1-t,t) = 0$ for all $t \in [0,1]$.
Essentially, each constraint equation is assigning a 6th order univariate polynomial to be zero over a line segment, which can only happen with the zero polynomial.
Thus, each coefficient of these polynomials, which is a linear combination of coefficients of $\tilde{w}_i$, must be 0 as well, creating 7 new linear equations per constraint.
Noting that the 1st order coefficients of $\frac{\partial}{\partial \phi_i^1} \tilde{w}_i(0,t)$ and $\frac{\partial}{\partial \phi_i^2} \tilde{w}_i(t,0)$ are equal,
as well as the 6th order coefficients of $\left( \frac{\partial}{\partial \phi_i^1} + \frac{\partial}{\partial \phi_i^2} \right) \tilde{w}_i(t,1-t)$ and $\left( \frac{\partial}{\partial \phi_i^1} + \frac{\partial}{\partial \phi_i^2} \right) \tilde{w}_i(1-t,t)$, we have 26 unique equations, leading to a total of 36 equations.
This system of equations still has a non-trivial null space, but we can solve for reasonable weight coefficients using the pseudoinverse.

To further reduce the memory footprint of these weights, we observe that the constraint equations corresponding to $\phi_i^1 = 0$ and $\phi_i^2 = 0$ are each in terms of a single $\tilde{w}_i$ coefficient, so we do not need to store those 13 coefficients at all.
Combined with symmetry, we only need to store 13 unique coefficients per weight function.

\section{Far Field Approximation}\label{sec:farfield}

Here we describe the Barnes-Hut far field approximation in more detail.
Let $B$ be a bounding region of points, $n_B$ be the number of points in $B$, $|B|$ be the diameter of $B$ (e.g., the bounding box diagonal), and $\by_B$ be the center of the far field approximation of $B$.
If $\frac{|B|}{d(\by_B, g)} < \beta$ for some user-defined $\beta$, then approximate the exponential summation over points $B$ as $\sum_{f_i \in B} \exp(-\alpha d(f_i, g)) \approx n_B \exp(-\alpha d(\by_B, g))$.
Similarly, the gradient contribution of $B$ (in the numerator summation of Eq.~\ref{eq:dlse}) is now $n_B \exp(-\alpha d(\by_B, g)) \frac{g(\blambda) - \by_B}{\lVert g(\blambda) - \by_B \rVert}$.
When our data primitives are edges or triangles, we must take some care with the weight function, which must be incorporated to ensure the approximation is reasonably smooth.
Since the approximation is just a single point, there is no undesirable concentration in some regions and we can simply use a weight of $w = A^S$, which corresponds to a constant weight function scaled by $A$ and attenuated by $S$.

Viewing the far field approximation as a Taylor series, we have only included the constant term.
However, we found that higher-order Taylor series terms tended to produce worse results due to the off-center expansion point amplifying the error of those terms.

\section{GPU Benchmark}\label{sec:gpu_benchmark}

Here we present the results of our GPU benchmark.
Structurally, it is identical to the benchmark in Section~\ref{sec:benchmark}, but it is run on a GPU instead of several CPU threads.
We ran these tests on a Titan RTX, and implemented our method in CUDA by ensuring our CPU code was also GPU-compatible.
The results are presented in Fig.~\ref{fig:gpu_benchmark}.
We see that, while performance is still linearly proportional to leaves visited, and the GPU code consistently outperforms CPU code by staying over the speedup=1 line and achieving an average $30\times$ speedup on points and $5\times$ speedup on edges and triangles, the relative speedups vary greatly, and are significantly reduced as more leaves are visited.
This is likely because global memory accesses become a significant issue in these cases, on top of the additional computation.
This is an especially pronounced issue on GPUs since memory is much slower to access on a GPU than a CPU.

\end{document}